\documentclass[smallextended]{svjour3}

\smartqed

\usepackage{color}
\usepackage{amsfonts,amsmath,amssymb,amsfonts}
\usepackage{graphicx}
\usepackage{tikz}
\usepackage{subfigure}
\usepackage{booktabs}
\usepackage{wasysym}
\usepackage{enumerate}
\usepackage{calc}

\usetikzlibrary{arrows,positioning,matrix}
\tikzstyle{line} = [draw, -]

\newcommand{\sign}{{\rm sign}}
\newcommand{\csign}{{\rm csign}}
\newcommand{\spec}{{\rm spec}}
\newcommand{\Dslash}{\mathcal{D}}
\newcommand{\DslashL}{\mathcal{D}_{\mathcal{L}}}

\newcommand{\pla}{Q}
\newcommand{\mathC}{\mathbb{C}}

\newcommand{\muh}{\hat{\mu}}
\newcommand{\nuh}{\hat{\nu}}
\newcommand{\SU}{\mathrm{SU}}
\newcommand{\su}{\mathfrak{su}}
\newcommand{\Sw}{S_\mathrm{W}}

\newcommand{\tr}[1]{\mathrm{tr}\left(#1\right)}
\newcommand{\dxmu}{\partial_{x,\mu}}
\renewcommand{\Re}{\mathrm{Re}}
\renewcommand{\outer}{\mathit{outer}}
\newcommand{\inner}{\mathit{inner}}

\newcommand{\restart}{\mathit{restart}}
\renewcommand{\prec}{\mathit{prec}}
\newcommand{\default}{\mathit{def}}
\newcommand{\ovl}{\mathit{ker}}

\newcounter{confcounter}
\newcommand{\conflabel}[1]{\refstepcounter{confcounter} \label{#1}}
 
\setcounter{topnumber}{5}

\setcounter{bottomnumber}{5}

\setcounter{totalnumber}{10}















\newcommand{\norm}[2][2]{\| #2 \|_{#1}}




\newcommand{\Q}[2]{\raisebox{-3ex}{\begin{tikzpicture}[scale=.4]\draw[thin,
black!30]
        (-1.2,-1.2) grid (1.2,1.2);\draw[->,shorten >=1pt]
        ({max(0,#1)},{max(0,#2)}) -- 
({max(0,#1)-1},{max(0,#2)});\draw[->,shorten >=1pt]
        ({max(0,#1)-1},{max(0,#2)}) -- 
({max(0,#1)-1},{max(0,#2)-1});\draw[->,shorten >=1pt]
        ({max(0,#1)-1},{max(0,#2)-1}) -- 
({max(0,#1)},{max(0,#2)-1});\draw[->,shorten >=1pt]
        ({max(0,#1)},{max(0,#2)-1}) -- 
({max(0,#1)},{max(0,#2)});\end{tikzpicture}}}

\begin{document}

\title{Multigrid Preconditioning for the Overlap Operator in Lattice QCD
  \thanks{This work was partially funded by the Deutsche
    Forschungsgemeinschaft (DFG) Transregional Collaborative Research
    Centre 55 (SFB/TRR55) and by the National Science Foundation under
    grant DMS:1320608}
}


\author{James Brannick \and 
        Andreas Frommer \and
        Karsten Kahl \and 
        Bj\"orn Leder \and
        Matthias Rottmann \and
        Artur Strebel
}


\institute{J. Brannick \at
              Department of Mathematics, Pennsylvania State University \\
              \email{brannick@psu.edu}           
           \and
           A. Frommer, K. Kahl, B. Leder, M. Rottmann, A. Strebel \at
              Fachbereich Mathematik und Naturwissenschaften, Bergische Universit\"at Wuppertal \\
              \email{\{frommer,kkahl,leder,rottmann,strebel\}@math.uni-wuppertal.de}
}

\date{Received: \today / Accepted: }

\maketitle

\begin{abstract}
The overlap operator is a lattice discretization of the Dirac operator 
of quantum chromodynamics, the fundamental physical theory of the strong interaction between the quarks. 
As opposed to other
discretizations it preserves the important physical property of chiral symmetry, at the expense of requiring
much more effort when solving systems with this operator. We present a preconditioning technique based on another lattice discretization, the Wilson-Dirac operator. The mathematical analysis precisely describes the effect of this preconditioning in the case that the Wilson-Dirac operator is normal. Although this is not exactly the case in realistic settings, we show that 
current smearing techniques indeed drive the Wilson-Dirac operator towards normality, thus providing a motivation why our preconditioner works well in computational practice. Results of numerical experiments 
in physically relevant settings show that our preconditioning yields accelerations of up to one order of magnitude. 

\keywords{preconditioning \and algebraic multigrid  \and lattice QCD \and Wilson-Dirac operator \and  overlap operator \and parallel computing}
\end{abstract}

\subclass{
65F08, 
65F10, 
65Z05, 
65Y05  
}

\section{Introduction}\label{sec:introduction}

The purpose of this paper is to motivate, analyze and experimentally validate a new preconditioner 
for the overlap operator of lattice QCD. Lattice QCD simulations are among today's most demanding supercomputer applications \cite{PRACE:ScAnnRep12,PRACE:ScC12} and substantial resources are spent in these computations.
From a theoretical point of view, the overlap operator is particularly attractive since it respects an
important physical property, chiral symmetry, which is violated by 
other lattice discretizations. From a practical point of view, the overlap operator has the disadvantage that its computational cost can be two orders of magnitude larger than when using standard discretizations.  

The basic idea of the preconditioner we propose is to use a standard discretization of the Dirac equation to form a preconditioner for the overlap operator. This may be regarded as a variant of the fictitious (or auxiliary) space preconditioning technique \cite{Nepomnyaschikh1991}
that has been used for developing and analyzing multilevel preconditioners for various nonconforming
finite element approximations of PDEs; cf.~\cite{Oswald1996,Xu1996}. In this context, one works with a mapping from the 
original space to a fictitious space, yielding an equivalent problem that is easier 
to solve. Preconditioning is then done by (approximately) solving this equivalent problem. The convergence
properties of auxiliary space preconditioning depend on the choice of the 
fictitious space, and its computational efficiency depends, in addition, on the efficiency of the solver used in that space; cf.\ \cite{Nepomnyaschikh1991}.

For the overlap operator in lattice QCD, choosing its
kernel---the Wilson-Dirac operator---as the auxiliary space preconditioner is facilitated by the fact that both operators 
are defined on the same Hilbert space.  In this way, the preconditioner for the former can be constructed 
using an adaptive algebraic multigrid solver for the latter on the same finite dimensional lattice.
We note that  similar approaches are possible for other QCD discretizations.
For example, the direct and strong coupling of the Wilson blocks used in the 5d domain wall operator \cite{Kaplan1992342} suggest that
a similar Wilson auxiliary-space preconditioner (with a more general mapping) may also be effective.

We demonstrate that the technique we develop in this paper is able to reduce the computational cost for solving systems with the overlap operator substantially, reaching speed-ups of a factor of 10 or more in realistic settings.  The preconditioning technique thus contributes to making the overlap operator more tractable in lattice QCD calculations.  

This paper is organized as follows. We start by explaining some physical background in section~\ref{QCD:sec} where we also introduce the two lattice discretizations of the continuum Dirac equation of interest here, the Wilson-Dirac operator and the overlap operator. In section~\ref{prec:sec} we give a precise mathematical analysis which shows that our preconditioner is effective in an idealized setting where both operators are assumed to be normal. Section~\ref{smearing:sec} shows that current {\em smearing} techniques in lattice QCD can be viewed as methods which drive the discretizations towards normality, thus motivating that we can expect the analysis of the idealized setting to also reflect the influence of the preconditioner in realistic settings. This is then confirmed by large-scale parallel numerical experiments reported in section~\ref{numerics:sec} which are performed for lattice configurations coming from state-of-the-art physical simulations.

\section{The Wilson-Dirac and the Overlap Operator in Lattice QCD}
\label{QCD:sec}
Quantum Chromodynamics (QCD) is a quantum field theory for the strong 
interaction of the quarks via gluons and as such part of the standard model of 
elementary particle physics. Predictions that can be deduced from this theory 
include the masses and resonance spectra of hadrons---composite particles 
bound by the strong interaction (e.g., nucleon, pion; cf.~\cite{Durr21112008}). 

The Dirac equation
\begin{equation}\label{Dirac_eq}
  (\Dslash+m)\psi = \eta
\end{equation} 
is at the heart of QCD. It describes the dynamics of the quarks and the 
interaction of quarks and gluons. 
Here, $\psi 
= \psi(x)$ and $\eta = \eta(x)$ represent quark fields. They depend on $x$, the 
points in space-time, $x=(x_0,x_1,x_2,x_3)$\footnote{Physical space-time is a 
four-dimensional Minkowski space. We present the theory in Euclidean space-time 
since this version can be treated numerically. The two versions are equivalent, 
cf.~\cite{montvay1994quantum}.}. The gluons are represented in the Dirac 
operator $\Dslash$ to be discussed below, and $m$ is a scalar mass parameter. 
It is independent of $x$ and sets the mass of the 
quarks in the QCD theory.

$\Dslash$ is given as 
\begin{equation} \label{Dirac_continuum:eq}
\Dslash=\sum_{\mu=0}^3\gamma_\mu \otimes \left( \partial_\mu + A_\mu \right)\,,
\end{equation}
where $ \partial_\mu = \partial / \partial x_\mu$ and $A$ is the gluon (background) gauge field with 
the anti-hermitian traceless matrices $A_\mu(x)$ being elements of $\mathfrak{su}(3)$, the Lie 
algebra of the special unitary group $\mathrm{SU}(3)$. The $\gamma$-matrices 
$\gamma_0,\gamma_1,\gamma_2,\gamma_3 \in  \mathbb{C}^{4 \times 4}$ represent the generators of the 
Clifford algebra with
\begin{equation} \label{commutativity_rel:eq}
  \gamma_\mu \gamma_\nu + \gamma_\nu \gamma_\mu = \begin{cases} 2 \cdot \mbox{id} &\mu = \nu\\0 & \mu \neq \nu \end{cases} \quad \text{ for } \mu,\nu=0,1,2,3. 
\end{equation}
Consequently, at each point $x$ in space-time, the spinor $\psi(x)$, i.e., the quark field $\psi$ at 
a given point $x$, is a twelve component column vector, each component corresponding to one of 
three colors (acted upon by $A_\mu(x)$) and four spins (acted upon by $\gamma_\mu$).  

For future use we remark that $\gamma_5 = 
\gamma_0\gamma_1\gamma_2\gamma_3$ satisfies
\begin{equation} \label{gamma_commutativity:eq}
     \gamma_5 \gamma_\mu = - \gamma_\mu \gamma_5, \enspace \mu=0,1,2,3.
\end{equation}



The only known way to obtain predictions in QCD from first principles and non-perturbatively, is to 
discretize and then simulate on a computer. The discretization is typically formulated on an equispaced lattice.
In a lattice discretization, a periodic $N_t \times N_s^3$ lattice $\mathcal{L}$ with uniform lattice spacing $a$ is used, $N_s$  denoting the number of lattice points for each of the three space dimensions and $N_t$ the number of lattice points in the time dimension. A quark field $\psi$ is now represented by its values at each lattice point, i.e., it is
a spinor valued function $\psi: \mathcal{L} \to \psi(x) \in \mathbb{C}^{12}$.

The {\em Wilson-Dirac} discretization is the most commonly used discretization in lattice QCD simulations. It is obtained from
the continuum equation by replacing the covariant derivatives by centralized covariant finite differences on the lattice together with an additional second order finite difference stabilization term. The Wilson-Dirac discretization yields a local operator in the sense that it represents a nearest neighbor coupling on the lattice. To precisely describe the action of the Wilson-Dirac operator $D_W$ on a (discrete) quark field $\psi$ we introduce the shift vectors $ \hat{\mu} = (\hat{\mu}_0,\hat{\mu}_1, \hat{\mu}_2, \hat{\mu}_3)  \in \mathbb{R}^4 $ in dimension $\mu$ on the lattice, i.e., 
$$
  \hat{\mu}_{\nu} = \begin{cases} a & \mu=\nu \\ 0 & \text{else.} \end{cases}.
$$
Then 
\begin{eqnarray}
  \hspace*{-0.7em}(D_W\psi)(x) \, = \,  \frac{m_0+4}{a} \psi(x) 
              &-& \frac{1}{2a}\sum_{\mu=0}^3 \left( (I_4-\gamma_\mu)\otimes U_\mu(x)\right) \psi(x+\hat{\mu}) \nonumber \\
             &-& \frac{1}{2a}\sum_{\mu=0}^3 \left( (I_4+\gamma_\mu)\otimes U_\mu^H(x-\hat{\mu})\right) \psi(x-\hat{\mu}), \label{Wilson-Dirac:eq}
\end{eqnarray}
where $U_\mu(x)$ now are matrices from the Lie group SU(3), and the lattice indices $x\pm \hat{\mu}$ are to be understood periodically. 
The mass parameter $m_0$ sets the quark mass (for further details, see~\cite{montvay1994quantum}), and we will write $D_W(m_0)$ whenever the dependence on $m_0$ is important. The matrices $U_\mu(x)$ are called  
{\em gauge links}, and the collection $\mathcal{U} = \{U_\mu(x): x \in \mathcal{L}, \mu=0,\ldots3\}$ is termed the {\em gauge field}.

From \eqref{Wilson-Dirac:eq} we see that the
couplings in $D_W$ from lattice site $x$ to $x+\hat{\mu}$ and from $x+\hat{\mu}$ to $x$ are given by 
\begin{equation} \label{DandDH_entries:eq}
(D_W)_{x,x+\hat{\mu}} = -(I_4-\gamma_\mu) \otimes U_\mu(x) \enspace \mbox{and} \enspace
(D_W)_{x+\hat{\mu},x} =  -(I_4+\gamma_\mu) \otimes U^H_\mu(x),
\end{equation}
respectively. Due to the commutativity relations \eqref{gamma_commutativity:eq} we therefore have that
\[
(\gamma_5\otimes I_3) \big(D_W\big)_{x,x+\hat{\mu}} = (\gamma_5 \otimes I_3) \big(D_W\big)_{x+\hat{\mu},x},
\]
implying that with $\Gamma_5 = I_{n_{\mathcal{L}}} \otimes \gamma_5 
\otimes I_3 $, $n_\mathcal{L}$ the number of lattice sites, we have 
\begin{equation} \label{gamma_5_symmetry:eq}
(\Gamma_5 D_W)^H = \Gamma_5 D_W.
\end{equation}
This $\Gamma_5$-symmetry is a non-trivial, fundamental symmetry
which the discrete Wilson-Dirac operator inherits from a corresponding
symmetry of the continuum Dirac operator
\eqref{Dirac_continuum:eq}. The matrix $\Gamma_5$ is hermitian and
unitary, since $\gamma_5^H = \gamma_5$ and $\gamma_5^2 = id$; see
\cite{FroKaKrLeRo13}, e.g., and \eqref{gamma_commutativity:eq}. 

The Wilson-Dirac operator and its clover improved variant (where a diagonal term is added in order 
to reduce the local discretization error from $\mathcal{O}(a)$ to $\mathcal{O}(a^2)$) is an adequate 
discretization for the numerical computation of many physical observables. It, however, breaks 
another fundamental symmetry of the continuum operator, namely {\em
  chiral symmetry}, which is of vital importance for some physical
observables like hadron spectra in the presence of magnetic fields,
for example.  As was pointed out in \cite{Luscher:1998pqa}, 
a lattice discretization $D$ of $\Dslash$ which obeys the Ginsparg-Wilson relation~\cite{WilsonGinsparg1982}
\begin{equation} \label{GW:eq}
\Gamma_5D + D\Gamma_5 = aD\Gamma_5 D
\end{equation}
satisfies an appropriate lattice variant of chiral symmetry. It has
long been unknown whether such a discretization exists until Neuberger
constructed it in~\cite{Neuberger1998141}. For convenience, the essentials of the arguments 
in \cite{Neuberger1998141} are summarized in the following proposition and its proof.

\begin{proposition} \label{neuberger:prop}
\emph{Neuberger's overlap operator}
  \begin{equation*}
    D_N = \frac{1}{a} \left( \rho I + D_{W}(m^{\ovl}_{0})\Big(D_{W}(m^{\ovl}_{0})^H
    (D_{W}(m^{\ovl}_{0})\Big)^{-\frac12} \right)
  \end{equation*} fulfills~\eqref{GW:eq} for $\rho=1$, has local discretization
  error $\mathcal{O}(a)$, and is a stable discretization
  provided $-2 < m_0^\ovl < 0$.
  \end{proposition}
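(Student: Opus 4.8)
The plan is to reduce all three claims to properties of the matrix sign function of the Hermitian operator $\Gamma_5 D_W$, where I abbreviate $D_W := D_W(m^\ovl_0)$. First I would exploit the $\Gamma_5$-symmetry \eqref{gamma_5_symmetry:eq}: since $\Gamma_5^2 = I$, the identity $(\Gamma_5 D_W)^H = \Gamma_5 D_W$ is equivalent to $D_W^H = \Gamma_5 D_W \Gamma_5$, whence
\[
(\Gamma_5 D_W)^2 = (\Gamma_5 D_W \Gamma_5)\, D_W = D_W^H D_W .
\]
Because $\Gamma_5 D_W$ is Hermitian, its positive square root satisfies $(D_W^H D_W)^{1/2} = |\Gamma_5 D_W|$, so with $S := \sign(\Gamma_5 D_W)$ one gets $D_W (D_W^H D_W)^{-1/2} = \Gamma_5 S$ and the compact form $aD_N = \rho I + \Gamma_5 S = I + \Gamma_5 S$ for $\rho = 1$. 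Here $S$ is Hermitian and, provided $\Gamma_5 D_W$ is nonsingular, an involution, $S^2 = I$. Writing $B := aD_N$, the Ginsparg-Wilson relation \eqref{GW:eq} is equivalent to $\Gamma_5 B + B\Gamma_5 = B\Gamma_5 B$, which I would verify by a one-line expansion using only $\Gamma_5^2 = S^2 = I$: both sides equal $2\Gamma_5 + S + \Gamma_5 S\Gamma_5$. This settles the chiral-symmetry claim and pinpoints exactly where nonsingularity of $\Gamma_5 D_W$ is required, tying it to stability.

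For stability I would read ``stable'' as $\Gamma_5 D_W$ (equivalently $D_W$) being invertible, so that $S$ is well defined, together with the operator describing exactly one physical fermion. I would treat the free case $U_\mu \equiv I$ by Fourier transform, where the symbol is $\hat D_W(p) = M(p)\, I + \tfrac{i}{a}\sum_\mu \gamma_\mu \sin ap_\mu$ with $M(p) = \tfrac1a\big[m_0 + \sum_\mu(1-\cos ap_\mu)\big]$, and the Clifford relations \eqref{commutativity_rel:eq} give $\hat D_W^H \hat D_W = \big(M(p)^2 + a^{-2}\sum_\mu \sin^2 ap_\mu\big)I$. A zero mode forces $\sin ap_\mu = 0$ for all $\mu$ and $M(p)=0$; at the corner momenta $ap_\mu \in \{0,\pi\}$ one has $M(p) = a^{-1}(m_0 + 2k)$ with $k$ the number of $\pi$-components, so $D_W$ is singular precisely for $m_0 \in \{0,-2,-4,-6,-8\}$. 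The open window $-2 < m_0 < 0$ both guarantees invertibility and is the unique range in which the physical corner $k=0$ carries $\sign(M) = -1$ while all fifteen doublers $k\ge 1$ carry $\sign(M) = +1$; I would then extend invertibility to generic gauge fields by a continuity/perturbation argument around the free spectrum.

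For the $\mathcal{O}(a)$ discretization error I would expand the free symbol as $a\to 0$ via the sign reformulation $a\hat D_N(p) = I + \hat D_W(p)/\lambda(p)$ with $\lambda(p) = \big(M^2 + a^{-2}\sum_\mu\sin^2 ap_\mu\big)^{1/2}$. Using $\sin ap_\mu = ap_\mu + \mathcal{O}(a^3)$ and $1-\cos ap_\mu = \tfrac12 a^2 p_\mu^2 + \mathcal{O}(a^4)$, the choice $m_0 < 0$ forces the scalar ratio $M/\lambda \to -1$, so the identity parts cancel and the surviving term gives $\hat D_N(p) = |m_0|^{-1}\, i\sum_\mu \gamma_\mu p_\mu + \mathcal{O}(a)$, reproducing the continuum symbol of \eqref{Dirac_continuum:eq} up to normalization, while the doublers acquire mass $\sim 2/a$ and decouple. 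Since the local error is governed by the short-distance (high-momentum) behavior, the same expansion controls the interacting case.

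The algebraic Ginsparg-Wilson computation is immediate once the $\sign$-reformulation is in place; the genuine difficulty lies in the $\mathcal{O}(a)$ claim, where one must expand the \emph{matrix} sign function and show in a single stroke that the physical mode reproduces $\Dslash$ with only $\mathcal{O}(a)$ corrections while all fifteen doubler modes are simultaneously lifted---a cancellation that hinges on the Wilson second-difference term together with the negative-mass choice $m_0 \in (-2,0)$. The second subtle point is passing from the free field, where $\Gamma_5 D_W$ diagonalizes in Fourier space, to arbitrary gauge links, where $\Gamma_5 D_W$ is a dense Hermitian operator whose low-lying spectrum depends intricately on the configuration; guaranteeing that it stays bounded away from zero---and hence that $S$, the resulting operator, and the error estimate all remain well behaved---is the real obstacle behind both the stability and the discretization-error parts.
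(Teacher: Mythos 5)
Your proposal is correct, but it takes a genuinely different route from the paper's. The paper never goes to Fourier space: it writes $a\widehat{D} = I + \Gamma_5 S$ with $S = \Gamma_5 D_W (D_W^H D_W)^{-1/2}$, substitutes the discretization-error relation $D_W(m_0) = \DslashL + \tfrac{m_0}{a}I + \mathcal{O}(a)$ directly into this formula, and uses the anti-selfadjointness $\DslashL^H = -\DslashL$ to expand $\big(D_W^H D_W\big)^{-1/2} = \tfrac{a}{|m_0^\ovl|}I + \mathcal{O}(a^2)$, so that the two identity terms cancel for $m_0^\ovl < 0$ and $\widehat{D} = \tfrac{1}{|m_0^\ovl|}\DslashL + \mathcal{O}(a)$ for an \emph{arbitrary} gauge background. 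Your symbol calculus performs the same cancellation (the ratio $M/\lambda \to -1$ killing the identity parts) but only explicitly for the free field $U_\mu \equiv I$; the closing appeal to ``short-distance behavior'' to cover the interacting case is a heuristic that the paper's operator expansion sidesteps. On the other hand, you supply two things the paper omits: a one-line verification that $a\widehat{D} = I + \Gamma_5 S$ with $\Gamma_5^2 = S^2 = I$ actually \emph{implies} the Ginsparg--Wilson relation (the paper only asserts the parametrization in the converse direction), and a concrete doubler analysis identifying the singular masses $\{0,-2,\ldots,-8\}$ and explaining why $(-2,0)$ is the correct window --- the paper explicitly declines to prove stability and defers to Neuberger's article. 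Be aware, though, that your ``continuity/perturbation argument'' for invertibility of $\Gamma_5 D_W$ at nonzero coupling is not a proof: for sufficiently rough gauge fields the spectral gap of $\Gamma_5 D_W$ can close (this is precisely the issue discussed around \cite{Neuberger:1999pz} at the end of section~\ref{smearing:sec}), so this step genuinely requires the hypotheses of Neuberger's theorem rather than a generic perturbation argument. Since the paper itself only cites the stability result, this does not put you behind the paper's own standard of rigor.
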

  \begin{proof} We write $\DslashL$ for the 
  restriction of the continuum Dirac operator $\Dslash$ from \eqref{Dirac_continuum:eq} to the lattice $\mathcal{L}$, i.e.,
  $\DslashL$ is the finite dimensional operator which takes the same values as $\Dslash$ at the points from $\mathcal{L}$. 
  The fact that the Wilson-Dirac operator has first order discretization error can then be expressed as\footnote{For simplicity, we consider here the ``naive'' limit $a \rightarrow 0$. In the full quantum theory
  one has $\DslashL = D_W(m_0(a)) + \mathcal{O}(a)$ with the mass $m_0(a)$ of order $1/\log(a)$; see~\cite{montvay1994quantum}.}
  \begin{equation*} 
     \DslashL = D_W(0) + \mathcal{O}(a),
  \end{equation*}
  implying
  \begin{equation} \label{eq:WilsonDiscError}
      \DslashL + \frac{m_{0}}{a}I = D_W(m_0) + \mathcal{O}(a)
  \end{equation}
  for any mass parameter $m_0$.

  To construct $D_N$ we first note that any
  operator $\widehat{D}$ that is $\Gamma_5$-symmetric and
  fulfills~\eqref{GW:eq} can be parametrized by
    \begin{equation}\label{eq:NeubergerParametrization}
      a\widehat{D} = I+\Gamma_5 S, 
    \end{equation} with $S^{H} = S$ and $S^2 = I$. Both conditions are fulfilled for
    \begin{equation*}
      S = \Gamma_{5}D_{W}(m^{\ovl}_{0})\Big(D_{W}(m^{\ovl}_{0})^H
    (D_{W}(m^{\ovl}_{0})\Big)^{-\frac12}, \quad -m_{0}^\ovl \in
    \mathbb{R} \setminus \operatorname{spec}(D_{W}(0)).
    \end{equation*} 
    Using~\eqref{eq:WilsonDiscError} we obtain
    \begin{eqnarray*} 
      S &=& \Gamma_{5}\Big(\DslashL + \tfrac{m_{0}^\ovl}{a} I +
      \mathcal{O}(a)\Big)\Big(\big(\DslashL + \tfrac{m_{0}^\ovl}{a}I
      + \mathcal{O}(a)\big)^H \big(\DslashL + \tfrac{m_{0}^\ovl}{a}I
      + \mathcal{O}(a)\big)\Big)^{-\frac12} .  
    \end{eqnarray*}
    Since $\Dslash$ is anti-selfadjoint, we have $\DslashL^H = -\DslashL$ and thus 
    \begin{eqnarray*} \lefteqn{\Big(\big(\DslashL + \tfrac{m_{0}^\ovl}{a} I + \mathcal{O}(a)\big)^H
    \big(\DslashL + \tfrac{m_{0}^\ovl}{a} I + \mathcal{O}(a)\big)\Big)^{-\frac12} }
    &&  \\
    &=& \tfrac{a}{|m_0^\ovl|} \Big(\big(\tfrac{a}{m_{0}^\ovl}\DslashL + I + \mathcal{O}(a^2)\big)^H
    \big(\tfrac{a}{m_{0}^\ovl} \DslashL + I + \mathcal{O}(a^2)\big)\Big)^{-\frac12} \\
    &= & \tfrac{a}{|m_{0}^\ovl|} I + \mathcal{O}(a^2),
    \end{eqnarray*}
  which in turn yields
    \begin{equation}\label{eq:NeubergerOrder}
      S = \Gamma_{5}\Big(\frac{a}{|m_0^\ovl|}\DslashL + \operatorname{sign}(m_{0}^\ovl)I + \mathcal{O}(a^{2})\Big).
    \end{equation} 
    Combining~\eqref{eq:NeubergerOrder} with \eqref{eq:NeubergerParametrization} we find
    \begin{equation*}
      a\widehat{D} = I + \frac{a}{|m_0^\ovl|}\DslashL + \operatorname{sign}(m_{0}^\ovl)I + \mathcal{O}(a^{2}) .
    \end{equation*}
    so that for $m_{0}^\ovl < 0$ we have
    \begin{equation*}
      \widehat{D} = \frac{1}{|m_0^\ovl|}\DslashL + \mathcal{O}(a).
    \end{equation*} 
    This shows that $\widehat{D}$ is a first order discretization of
    $\Dslash$. For it to be stable 
    one has to choose $-2 < m_{0}^\ovl < 0$, a result for which we do
    not reproduce a proof here, referring to \cite{Neuberger1998141}
    instead.
    
    To conclude, note that $D_{N} = \widehat{D} +
      {\tfrac{\rho - 1}{a}I}$, so $\rho - 1$ sets the
    quark mass (see \eqref{Dirac_eq}) up to a renormalization factor. \qed
  \end{proof}

Using the Wilson-Dirac operator as the kernel in the overlap operator is 
the most popular choice, even though
other kernel operators have been investigated as well~\cite{deForcrand:2011ak}.
Neuberger's overlap operator has emerged as a popular scheme in lattice QCD over the years.\footnote{The domain wall
  discretization satisfies \eqref{GW:eq} approximately and, hence, has
  also been the focus of extensive research.} In the literature one
often writes
\begin{equation} \label{overlap_def:eq}
  D_N = \rho I + \Gamma_5 \sign\big(\Gamma_5 D_W(m_0^\ovl)\big)
\end{equation}
with $\sign$ denoting the matrix extension of the
sign function  
\[
\sign(z) = \left\{ \begin{array}{ll} +1 & \mbox{if } \Re(z) > 0 \\
  -1 & \mbox{if } \Re(z) < 0 \end{array} \right. .
\]
We note that $\sign(z)$ is undefined if $\Re(z) = 0$. 
Since $\Gamma_5D_W(m_0)$ is hermitian, see \eqref{gamma_5_symmetry:eq}, the matrix $\sign(\Gamma_5 D_W(m_0^\ovl))$ is hermitian, too. Since $\Gamma_5^2 = I$, we also see 
that the overlap operator satisfies the same $\Gamma_5$-symmetry as
its kernel $D_W$,
\begin{equation} \label{Gamma_5_symmetry_D_N:eq}
\big(\Gamma_5D_N\big)^H = \Gamma_5 D_N.
\end{equation}

We end this section with a characterization of the spectra of the
Wilson-Dirac and the overlap operator. 

\begin{lemma} \label{props_Wilson_Overlap:lem}
\begin{itemize}
\item[(i)] The spectrum of the Wilson-Dirac matrix $D_W(m_0)$ is symmetric to the real axis and to the vertical line $\Re(z) = \frac{m_0+4}{a}$, i.e., 
\[
  \lambda \in \spec\big(D_W(m_0)\big) \Rightarrow  \overline{\lambda}, \, {\textstyle 2 \frac{m_0+4}{a}-\lambda} \in \spec\big(D_W(m_0)\big).
\] 
\item[(ii)] The overlap operator $D_N$ is normal. Its spectrum is symmetric to the real axis and part of the circle with midpoint $\rho$ and radius 1, i.e.,
\[
  \lambda \in \spec\big(D_N\big) \Rightarrow  \overline{\lambda} \in \spec\big(D_N\big) \mbox{ and }
|\lambda-\rho| = 1.
\] 
\end{itemize}
\end{lemma}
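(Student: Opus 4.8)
The plan is to treat the two parts separately, deriving both real-axis symmetries from the $\Gamma_5$-symmetry relations already recorded in \eqref{gamma_5_symmetry:eq} and \eqref{Gamma_5_symmetry_D_N:eq}, and obtaining the remaining geometric structure (the vertical-line reflection in (i) and the circle in (ii)) from the special algebraic form of each operator.

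For the real-axis symmetry in (i), I would first rewrite \eqref{gamma_5_symmetry:eq} as $D_W^H = \Gamma_5 D_W \Gamma_5$, using $\Gamma_5^H = \Gamma_5 = \Gamma_5^{-1}$. This exhibits $D_W$ and $D_W^H$ as unitarily similar, so $\spec(D_W) = \spec(D_W^H) = \overline{\spec(D_W)}$, which is exactly the claim $\lambda \in \spec(D_W) \Rightarrow \overline{\lambda} \in \spec(D_W)$. For the reflection about $\Re(z) = \frac{m_0+4}{a}$, I would split the operator as $D_W(m_0) = \frac{m_0+4}{a}I - H$, where $H$ is the \emph{hopping} part given by the two sums in \eqref{Wilson-Dirac:eq}. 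The key observation is that $H$ couples only nearest-neighbour lattice sites, i.e.\ sites of opposite parity $(-1)^{x_0+x_1+x_2+x_3}$. Introducing the diagonal \emph{checkerboard} operator $P$ with entries $(-1)^{x_0+x_1+x_2+x_3}$ on lattice site $x$ (tensored with the identity on spin and colour), one has $P = P^H = P^{-1}$ and $PHP = -H$, hence $P D_W(m_0) P = \frac{m_0+4}{a}I + H = 2\frac{m_0+4}{a}I - D_W(m_0)$. Thus $D_W(m_0)$ is similar to $2\frac{m_0+4}{a}I - D_W(m_0)$, their spectra agree, and this is the stated reflection.

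The main obstacle in (i) is this vertical-line reflection: one must identify the correct similarity $P$, check carefully that every nonzero block of $H$ indeed links sites of opposite parity, and note that the periodic wrap-around is compatible with a globally well-defined parity only when each lattice extent $N_t, N_s$ is even, which I would state as a standing assumption (it holds for all physically relevant lattices). Everything else reduces to bookkeeping.

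For (ii), set $Q = \Gamma_5 D_W(m_0^\ovl)$, which is Hermitian by \eqref{gamma_5_symmetry:eq} and nonsingular because $m_0^\ovl$ is chosen so that $D_W(m_0^\ovl)$ is invertible; hence $S = \sign(Q)$ is a Hermitian involution, $S^H = S$ and $S^2 = I$, so $S$ is unitary. Writing $D_N = \rho I + \Gamma_5 S$ as in \eqref{overlap_def:eq}, I would show that $U := \Gamma_5 S$ is unitary, being a product of the two unitaries $\Gamma_5$ and $S$: indeed $U U^H = \Gamma_5 S S \Gamma_5 = \Gamma_5^2 = I$, and likewise $U^H U = I$. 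Since $\rho \in \mathbb{R}$, the operator $D_N = \rho I + U$ is a degree-one polynomial in the normal matrix $U$ and is therefore normal; alternatively one verifies $D_N D_N^H = \rho^2 I + \rho(U + U^H) + I = D_N^H D_N$ directly. Normality of $U$ yields a unitary eigenbasis with eigenvalues $\omega$ of modulus $1$; the corresponding eigenvalues of $D_N$ are $\rho + \omega$, so $|\lambda - \rho| = |\omega| = 1$, placing $\spec(D_N)$ on the stated circle. Finally, the real-axis symmetry follows exactly as in (i): \eqref{Gamma_5_symmetry_D_N:eq} gives $D_N^H = \Gamma_5 D_N \Gamma_5$, whence $\spec(D_N) = \overline{\spec(D_N)}$. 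This part is essentially routine once $S$ is recognised as a Hermitian involution.
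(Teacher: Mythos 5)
Your proof is correct and follows essentially the same route as the paper: the real-axis symmetries come from the $\Gamma_5$-symmetries \eqref{gamma_5_symmetry:eq} and \eqref{Gamma_5_symmetry_D_N:eq}, the circle in (ii) from recognising $\sign(\Gamma_5 D_W(m_0^\ovl))$ as a Hermitian involution so that $\Gamma_5\sign(\Gamma_5 D_W(m_0^\ovl))$ is unitary, and the vertical-line reflection from the bipartite (even--odd) structure of the hopping term. Your diagonal checkerboard similarity $PD_W(m_0)P = 2\tfrac{m_0+4}{a}I - D_W(m_0)$ is just an explicit repackaging of the paper's red-black block argument (flipping the sign of the black components of an eigenvector), and your remark that the lattice extents must be even for the parity to be globally consistent with periodicity is a legitimate point the paper leaves implicit.
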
  
\begin{proof}
Recall that $\Gamma_5^H = \Gamma_5^{-1} = \Gamma_5$. If $D_W(m_0) x = \lambda x$, then by \eqref{gamma_5_symmetry:eq} we have $(\Gamma_5x)^HD_W = x^H(\Gamma_5D_W) = (\Gamma_5D_Wx)^H = \overline{\lambda} (\Gamma_5 x)^H$. This proves the first assertion in (i). For the second assertion,
consider a red-black ordering of the lattice sites. 
where all red sites appear before black sites.  Then the matrix $D_W(-\frac{4}{a})$ has the block structure
\[
  D_W(\textstyle{-\frac{4}{a}}) = \left( \begin{array}{cc} 0 & D_{rb} \\ D_{br} & 0 \end{array} \right).
\]
Thus, if $x = (x_r, x_b)$ is an eigenvector of $D_W(-\frac{4}{a})$ with eigenvalue $\mu$, then $x' = (x_r, -x_b)$ is an eigenvector of $D_W(-\frac{4}{a})$ with eigenvalue $-\mu$. Applying this result to 
$D_W(m_0)$ gives the second assertion in (i).

To prove (ii) we first remark that the sign function is its own inverse and that 
$\Gamma_5 D_W(m_0)$ is hermitian. This implies that $\sign(\Gamma_5D_W(m_0))$ is its own inverse and hermitian, thus unitary. Its product with the unitary matrix $\Gamma_5$ is unitary as well,  
implying that all its eigenvalues have modulus one. As a unitary matrix, this product is also normal. The term $\rho I$ in \eqref{overlap_def:eq} preserves normality and shifts the eigenvalues by $\rho$. 

It remains to show that $\spec(D_N)$ is symmetric with respect to the real axis, which follows from 
the $\Gamma_5$-symmetry \eqref{Gamma_5_symmetry_D_N:eq} of the overlap operator in the same manner as in (i). \qed
\end{proof}

For the purposes of illustration, Figure~\ref{spectra:fig} gives the spectra of the Wilson-Dirac operator and the overlap operator for a $4^4$ 
lattice. There, as everywhere else from now on, we set $a=1$ which is no restriction since $a^{-1}$ enters $D_W$ simply as a linear scaling. The 
matrix size is just $3,\!072$, so  all 
eigenvalues and the sign function can be computed with standard methods for full matrices. The choice for $m_0$ in the Wilson-Dirac matrix 
as a negative number such that the spectrum of $D_W$ lies in the right half plane with some eigenvalues being close to the imaginary axis is 
typical. The choice for $m_0$ when $D_W(m_0)$ appears in the kernel of 
the sign function is different (namely smaller, see Proposition~\ref{neuberger:prop}).
\begin{figure}
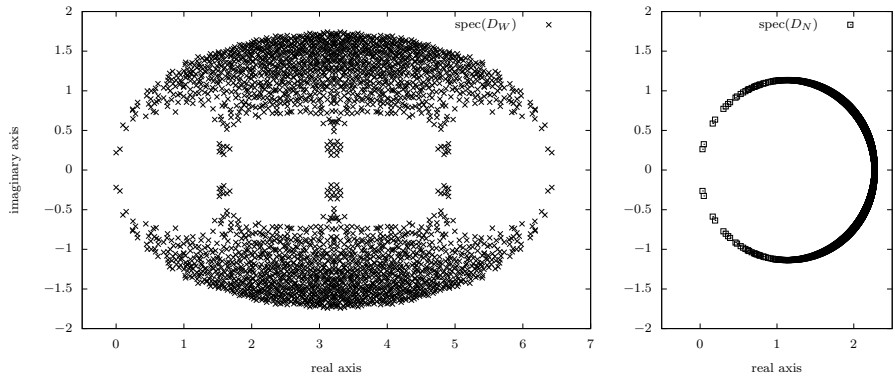

  \begin{minipage}[t]{0.65\textwidth}
    \centering\scalebox{0.61}{\input{./overlap_spec_wilson.tex}}
  \end{minipage}
  \hfill
  \begin{minipage}[t]{0.34\textwidth}
    \centering\scalebox{0.61}{\input{./overlap_spec_overlap.tex}}
  \end{minipage}
\caption{Typical spectra of the Wilson-Dirac and the overlap operator
for a $4^4$ lattice. 
}
\label{spectra:fig}
\end{figure}

\section{A Preconditioner Based on the Wilson-Dirac Operator} \label{prec:sec}

The spectral gaps to be observed as four discs with relatively few eigenvalues in the 
left part of Figure~\ref{spectra:fig} are typical for the spectrum of the Wilson-Dirac operator and become even more pronounced as lattice sizes are increased. In practice, the mass parameter $m_0$ that appears
in the definition of the kernel 
$D_W(m_0^\ovl)$ of the overlap operator is chosen such that the origin lies in the middle of the leftmost of these discs. For this choice of $m_0^\ovl$ we now motivate why the 
Wilson-Dirac operator $D_W(m_0^\prec)$ with adequately chosen mass $m_0^\prec$ provides a good preconditioner for the overlap operator.

To do so we investigate the connection of the spectrum of the overlap operator and the Wilson-Dirac operator in the special case that $D_W(0)$ is normal. This means that 
$D_W(0)$ is unitarily diagonalizable with possibly complex eigenvalues, i.e.,
\begin{equation} \label{eq_D_normal_dec}
D_W(0) = X \Lambda X^H, \mbox{ with } \Lambda \mbox{ diagonal and $X$ unitary.} 
\end{equation}
Trivially, then, $D_W(m_0)$ is normal for all mass parameters $m_0$ and
\begin{equation} \label{eq:D_m_normal}
D_W(m_0) = X(\Lambda + m_0I)X^H.
\end{equation}

To formulate the resulting non-trivial relation between the eigenvalues of 
$D_N$ and its kernel $D_W(m_0^\ovl)$ in the theorem below we use the notation $\csign(z)$ for a complex
number $z$ to denote its ``complex'' sign, i.e., 
\[
\csign(z) = z/|z| \mbox{ for } z \neq 0. 
\]
The theorem works with the singular value decomposition
$A = U \Sigma V^H$ of a matrix $A$ in which $U$ and $V$ are orthonormal, containing the left and right singular vectors as their columns, respectively, and $\Sigma$ is diagonal with non-negative diagonal elements, the 
singular values. The singular value decomposition is unique up to choices for the orthonormal basis of singular vectors belonging to the same singular value, i.e., up to transformations $U \to UQ, V \to VQ$ 
with $Q$ a unitary matrix commuting with $\Sigma$; cf.~\cite{GLMatrix1989}.   

\begin{theorem} \label{the:D_D_N_eigs}
Assume that $D_W(0)$ is normal, so that $D_W(m)$ is normal as well for all $m \in \mathC$, and let $X$ and $\Lambda$ be from \eqref{eq_D_normal_dec}. Then we have
\begin{equation} \label{D_N_normal:eq}
D_N = X\big(\rho I + \csign(\Lambda + m_0I)\big)X^H.
\end{equation}
\end{theorem}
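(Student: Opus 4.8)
The plan is to start from the closed form \eqref{overlap_def:eq}, $D_N = \rho I + \Gamma_5\sign\big(\Gamma_5 D_W(m_0)\big)$, and reduce the entire statement to how a scalar function acts on the diagonal factor $\Lambda + m_0 I$. The decisive first move is to rewrite the combination $\Gamma_5\sign\big(\Gamma_5 D_W(m_0)\big)$ so that $\Gamma_5$ disappears entirely and only $D_W(m_0)$ and $D_W(m_0)^H$ remain; once that is done, substituting the normal decomposition is pure bookkeeping.

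First I would exploit the $\Gamma_5$-symmetry. Since $\Gamma_5 D_W(m_0)$ is Hermitian by \eqref{gamma_5_symmetry:eq}, the relation $(\Gamma_5 D_W)^H = \Gamma_5 D_W$ together with $\Gamma_5^2 = I$ gives $D_W(m_0)^H = \Gamma_5 D_W(m_0)\Gamma_5$, and hence $\big(\Gamma_5 D_W(m_0)\big)^2 = \Gamma_5 D_W(m_0)\Gamma_5 D_W(m_0) = D_W(m_0)^H D_W(m_0)$. Using the standard representation $\sign(H) = H(H^2)^{-1/2}$ of the sign function of a Hermitian matrix $H = \Gamma_5 D_W(m_0)$, multiplying by $\Gamma_5$ on the left, and again invoking $\Gamma_5^2 = I$, I arrive at
\[
\Gamma_5\sign\big(\Gamma_5 D_W(m_0)\big) = D_W(m_0)\big(D_W(m_0)^H D_W(m_0)\big)^{-\frac12}.
\]
This is precisely the factor $S$ from Neuberger's parametrization in Proposition~\ref{neuberger:prop}, so no new justification is required. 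The inverse square root exists exactly when $D_W(m_0)$ is nonsingular, i.e.\ when no diagonal entry of $\Lambda + m_0 I$ vanishes, which is the spectral condition $-m_0 \notin \spec\big(D_W(0)\big)$ already imposed on the kernel mass.

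Then I substitute the normal decomposition \eqref{eq:D_m_normal}. Writing $D := \Lambda + m_0 I$, so that $D_W(m_0) = X D X^H$ with $D$ diagonal and $X$ unitary, I get $D_W(m_0)^H D_W(m_0) = X\overline{D}D X^H = X|D|^2 X^H$, and therefore $\big(D_W(m_0)^H D_W(m_0)\big)^{-1/2} = X|D|^{-1}X^H$, the diagonal matrix with entries $1/|d_i|$. The unitary factors telescope in the product, leaving the diagonal matrix $D|D|^{-1}$ whose $i$-th entry is $d_i/|d_i| = \csign(d_i)$; that is, $\Gamma_5\sign\big(\Gamma_5 D_W(m_0)\big) = X\,\csign(\Lambda + m_0 I)\,X^H$. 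Adding $\rho I = X(\rho I)X^H$ yields \eqref{D_N_normal:eq}, and normality of $D_N$ is immediate because the right-hand side displays $D_N$ as $X(\text{diagonal})X^H$ with $X$ unitary.

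I expect the only genuinely delicate step to be the displayed identity $\Gamma_5\sign\big(\Gamma_5 D_W(m_0)\big) = D_W(m_0)\big(D_W(m_0)^H D_W(m_0)\big)^{-1/2}$: everything afterwards is diagonalization bookkeeping, but this step quietly combines the Hermitian-square-root form of the sign function with the $\Gamma_5$-symmetry used to convert $\big(\Gamma_5 D_W\big)^2$ into $D_W^H D_W$. Care is needed to confirm that the square-root and inverse factors commute as the manipulation requires, which they do because $D_W(m_0)^H D_W(m_0)$ is a polynomial in the Hermitian matrix $\Gamma_5 D_W(m_0)$; one should also note that, although the eigendecomposition of $D_W(0)$ is nonunique in the presence of repeated eigenvalues, the final expression is unambiguous since $\csign$ acts entrywise on $D$ and thus depends only on the eigenvalues, not on the chosen eigenbasis.
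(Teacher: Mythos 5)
Your proof is correct, but it follows a genuinely different route from the paper's. The paper works with the eigendecomposition $\Gamma_5 D_W(m) = W_m\Delta_m W_m^H$ of the Hermitian matrix $\Gamma_5 D_W(m)$, writes down two singular value decompositions of $\Gamma_5 D_W(m)$ --- one built from the normal decomposition $X(\Lambda+mI)X^H$, one from $W_m\Delta_m W_m^H$ --- and invokes the uniqueness of the SVD up to a unitary $Q$ commuting with $\Sigma$ to match the two, which yields $W_m = XQ$ and $W_m\sign(\Delta_m)=\Gamma_5 X\csign(\Lambda+mI)Q$ and hence the claim after substituting into \eqref{overlap_def:eq}. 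You instead eliminate $\Gamma_5$ at the outset: from $\sign(H)=H(H^2)^{-1/2}$ for Hermitian nonsingular $H$ and the identity $(\Gamma_5 D_W)^2 = D_W^H D_W$ (a direct consequence of \eqref{gamma_5_symmetry:eq} and $\Gamma_5^2=I$) you reduce everything to $D_N = \rho I + D_W(D_W^HD_W)^{-1/2}$ --- i.e., you first prove the equivalence of \eqref{overlap_def:eq} with the form in Proposition~\ref{neuberger:prop}, which the paper states but never verifies --- and then diagonalize. Your route buys a cleaner argument: it avoids the SVD-uniqueness step and the bookkeeping with the unitary $Q$ entirely, and the required well-definedness ($-m_0\notin\spec(D_W(0))$, positive definiteness of $D_W^HD_W$, invariance of $\csign$ under the choice of eigenbasis) is all addressed. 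What the paper's route buys in exchange is the by-product recorded in the remark after the theorem, namely that the eigenvectors of $\Gamma_5 D_W(m)$ are independent of $m$ and that $\Gamma_5$ and $\Gamma_5 D_W$ share a common eigenbasis when $D_W$ is normal; your argument does not produce this, since you never construct the eigendecomposition of $\Gamma_5 D_W(m)$. As a minor point in your favor, the paper's final chain of equalities contains typographical slips (e.g.\ the factor written as $(VQ)^X$), whereas your telescoping of the unitary factors is unambiguous.
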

\begin{proof}
Let
\begin{equation} \label{gamma_D_eigendecomposition:eq}
\Gamma_5 D_W(m) = W_{m} \Delta_{m} W_{m}^H \mbox{ with } \Delta_m \mbox{ diagonal}, W_m \mbox{ unitary},
\end{equation}
be the eigendecomposition of the hermitian matrix $\Gamma_5 D_W(m)$.
We have two different representations for the singular value decomposition of 
$\Gamma_5D_W(m)$,
\[
\begin{array}{rcll}
   \Gamma_5 D_W(m) &=& \big(\Gamma_5X\csign(\Lambda+mI)\big) \cdot |\Lambda +mI| \cdot  X^H  
 &\quad \mbox{(from \eqref{eq:D_m_normal})}\; , \\
  \Gamma_5 D_W(m) &=& \big( W_m\sign(\Delta_m)\big) \cdot |\Delta_m| \cdot W_m^H  
   &\quad \mbox{(from \eqref{gamma_D_eigendecomposition:eq})} \; . 
\end{array}
\]
Thus, there exists a unitary matrix $Q$ such that
\begin{equation} \label{themess:eq}
W_m = XQ \mbox{ and } W_m \sign(\Delta_m) = \Gamma_5 X \csign(\Lambda + mI) Q.
\end{equation}
Using the definition of $D_N$ in \eqref{overlap_def:eq}, the relations \eqref{themess:eq} give
\begin{eqnarray*}
D_N &=& \rho I + \Gamma_5 \sign(\Gamma_5 D_m) \\
   &=& \rho I + \Gamma_5 W_m \sign(\Delta_m) W_m^H \\
&=& \rho I + \Gamma_5 \Gamma_5 X \csign (\Lambda + mI)Q (VQ)^X \\
&=& X(\rho I + \csign(\Lambda +mI)X^H. 
\end{eqnarray*}
\qed
\end{proof}

We remark in passing that as an implicit consequence of the proof above we have that the eigenvectors 
of $\Gamma_5 D_W(m) = \Gamma_5 D_W(0) + m\Gamma_5$ do not depend on
$m$. Thus if $D_W$ is normal, 
$\Gamma_5$ and $\Gamma_5D_W$ admit a basis of common eigenvectors. 

The result in \eqref{D_N_normal:eq} implies that $D_N=\rho I + \Gamma_5\sign(\Gamma_5D_W(m_0^\ovl))$ and $D_W(0)$ 
share the same eigenvectors and that 
\[
\spec(D_N) = \{ \rho + \csign(\lambda +m^\ovl_0), \lambda \in \spec(D_W(0))\}.
\]
Taking $D_W(m_0^\prec)$ as a preconditioner for $D_N$, we would like eigenvalues of $D_N$ which are small in 
modulus to be mapped to eigenvalues close to 1 in the preconditioned matrix $D_ND_W(m_0^\prec)^{-1}$.
Since $D_W(m_0^\prec)$ and $D_N$ share the same eigenvectors, the spectrum of the preconditioned matrix 
is
\[
\spec\big(D_N D_W(m_0^\prec)^{-1}\big) = \Big\{ {\frac{\rho + \csign(\lambda +m_0^\ovl)}{\lambda + m_0^\prec}}, \lambda \in \spec(D_W(0) \Big\}.
\]
For $\omega > 0$ and $m_0^\prec = \omega\rho+m_0^\ovl$, the mapping
\[
g: \mathC \to \mathC, z \mapsto {\frac{\rho + \csign(z +m_0^\ovl)}{z + m_0^\prec}}
\]
sends $C(-m_0^\ovl,\omega)$, the circle with center $-m_0^\ovl$ and radius $\omega$, to one single value $\frac{1}{\omega}$.
We thus expect $D_W(m_0^\prec)$ to be a good preconditioner if we
choose $m_0^\prec$ in such a manner
that the small eigenvalues of $D_W(m_0^\prec)$ lie close to $C(-m_0^\ovl,\omega)$. 
Let $\sigma_{\min} > 0$ denote the smallest real part of all eigenvalues of
$D_W(0)$. Assuming for the moment that $\sigma_{\min}$ is actually an eigenvalue, this eigenvalue will lie exactly 
on $C(-m_0^\ovl,\omega)$ if we have 
\begin{equation} \label{eq:default_m}
\omega = \omega^{\default} := -m_0^\ovl- \sigma_{\min} \mbox{ and thus } m_0^\prec = m_0^{\default} := \omega^{\default} \rho + m_0^\ovl.
\end{equation}

For physically relevant parameters, $\omega^\default$ is close to 1.
We will take $m_0^{\default}$ from \eqref{eq:default_m} as our default choice for the mass parameter when preconditioning
with the Wilson-Dirac operator, although a slightly larger value for $\omega$ might 
appear adequate in situations where the eigenvalues with smallest real part come as a complex conjugate 
pair with non-zero imaginary part.

Although $D_W(0)$ is non-normal in physically relevant situations, we expect the above reasoning 
to also lead to an effective  Wilson-Dirac preconditioner in these settings, and particularly so when the deviation of $D_W(0)$
from normality, as measured in some suitable norm of $D_W^HD_W - D_WD_W^H$, becomes small. This is so, e.g., 
when the lattice spacing is decreased while keeping the physical volume constant, i.e., in 
the ``continuum limit'', since the Wilson-Dirac operator then approaches the continuous Dirac operator which is normal. Moreover, 
as we will show in 
section~\ref{smearing:sec}, when {\em smearing} techniques are applied to a given gauge 
configuration $U_\mu(x)$,  the deviation of $D_W(0)$ from normality is also decreased.
Figure~\ref{fig:spectra_prec} shows the spectrum for the preconditioned matrix with the choice \eqref{eq:default_m}
for $m_0^\prec$ for the same $4^4$  configuration as in Figure~\ref{spectra:fig}. The matrices in 
these tests are not normal, nonetheless the spectrum of the preconditioned matrix tends to 
concentrate around 0.7. 

\begin{figure}
  \centering \scalebox{0.7}{\input{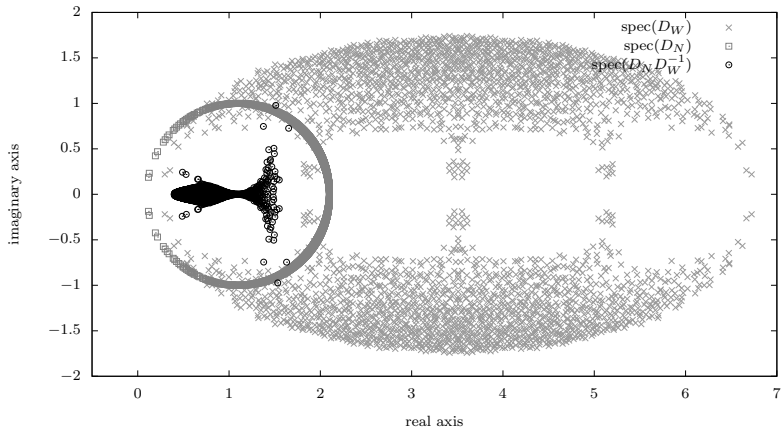}}
  \caption{Spectra for a configuration of size $4^4$}
  \label{fig:spectra_prec}
\end{figure}

In the normal case, the singular values are the absolute values of the eigenvalues, and the singular vectors 
are intimately related to the eigenvectors. This relation was crucial to the proof of Theorem~\ref{the:D_D_N_eigs}. In the non-normal case, the relation \eqref{D_N_normal:eq}, which uses the eigenvectors of $D_W(0)$, does not hold. For the sake of completeness we give, for the general, non-normal case, the following result which links the overlap operator to the singular value decomposition of its kernel $D_W(m)$.

\begin{lemma} \label{lem:svd_kernel}
Let $\Gamma_5 D_W(m) = W_m \Delta_m W_m^H$ denote an eigendecomposition of the 
hermitian matrix
$\Gamma_5D_W(m)$, where $\Delta_m$ is real and diagonal and $W_m$ is unitary. 
Then 
\begin{itemize}
   \item[(i)] A singular value decomposition of $D_W(m)$ is given as 
    \[
       D_W(m) = U_m \Sigma_m V_m^H \mbox{ with } V_m = W_m, \Sigma_m = |\Delta_m|, U_m = \gamma_5  
     W_m \sign(\Delta_m).
  \]
     \item[(ii)]  The overlap operator with kernel $D_W(m)$ is given as
\[
    D_N = \rho I + \Gamma_5 \sign\big(\Gamma_5D_W(m) \big) = \rho I + U_mV_m^H.
\]
\end{itemize}
\end{lemma}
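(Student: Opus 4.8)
The plan is to read off both claims directly from the hermitian eigendecomposition $\Gamma_5 D_W(m) = W_m \Delta_m W_m^H$, exploiting that $\Gamma_5$ is simultaneously hermitian and unitary, i.e.\ $\Gamma_5^H = \Gamma_5$ and $\Gamma_5^2 = I$ (as recorded after \eqref{gamma_5_symmetry:eq}).

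For part (i), I would first multiply the eigendecomposition on the left by $\Gamma_5$ and use $\Gamma_5^2 = I$ to get $D_W(m) = \Gamma_5 W_m \Delta_m W_m^H$. The key algebraic step is then to split the real diagonal matrix as $\Delta_m = \sign(\Delta_m)\,|\Delta_m|$ and to regroup the factors, $D_W(m) = \big(\Gamma_5 W_m \sign(\Delta_m)\big)\,|\Delta_m|\,W_m^H = U_m \Sigma_m V_m^H$. To conclude that this is a genuine singular value decomposition I would verify the three defining properties: $\Sigma_m = |\Delta_m|$ is diagonal with non-negative entries by construction; $V_m = W_m$ is unitary by hypothesis; and $U_m = \Gamma_5 W_m \sign(\Delta_m)$ is a product of three unitary factors and hence unitary. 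At this point I would make explicit the standing invertibility assumption on the kernel, namely $-m \notin \spec(D_W(0))$, so that $\Gamma_5 D_W(m)$ has no zero eigenvalue; this guarantees that every diagonal entry of $\sign(\Delta_m)$ equals $\pm 1$, so that $\sign(\Delta_m)$ is unitary with $\sign(\Delta_m)^2 = I$ and the sign is everywhere defined.

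For part (ii), the first equality is simply the definition \eqref{overlap_def:eq} of $D_N$ specialized to $m = m_0^\ovl$. For the second equality I would use that the matrix sign function of the hermitian matrix $\Gamma_5 D_W(m)$ acts eigenvalue-wise through its eigendecomposition, $\sign(\Gamma_5 D_W(m)) = W_m \sign(\Delta_m) W_m^H$, so that $\Gamma_5 \sign(\Gamma_5 D_W(m)) = \Gamma_5 W_m \sign(\Delta_m) W_m^H$. Recognizing the leading factor as $U_m$ and the trailing $W_m^H$ as $V_m^H$ from part (i) then gives $\Gamma_5 \sign(\Gamma_5 D_W(m)) = U_m V_m^H$, which is exactly the claimed identity.

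The computations are elementary once the eigendecomposition is in hand, so I do not expect a serious obstacle. The only points requiring care are bookkeeping ones: ensuring that $\sign(\Delta_m)$ is genuinely unitary (which relies on the invertibility of the kernel), and invoking correctly that the matrix sign function of a hermitian matrix commutes with unitary conjugation, which is precisely the spectral statement that $\sign$ is applied to the diagonal $\Delta_m$ entrywise.
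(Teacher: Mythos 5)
Your proposal is correct and follows essentially the same route as the paper's proof: factor $D_W(m)=\Gamma_5 W_m\sign(\Delta_m)\,|\Delta_m|\,W_m^H$ for (i) and use $\sign(\Gamma_5 D_W(m))=W_m\sign(\Delta_m)W_m^H$ for (ii). Your added remark that the invertibility assumption $-m\notin\spec(D_W(0))$ is needed for $\sign(\Delta_m)$ to be well defined and unitary is a careful point the paper leaves implicit.
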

\begin{proof}
Since $\Gamma_5^{-1} = \Gamma_5$, we have the factorization $D_W(m) = \Gamma_5 W_m \Delta_m W_m^H = 
\Gamma_5 W_m \sign(\Delta_m) |\Delta_m| W_m^H$, in which $\Gamma_5 W_m \sign(\Delta_m)$ and $W_m$
are unitary and $|\Delta_m|$ is diagonal and non-negative. This proves (i). To show (ii), just observe that for the hermitian matrix $\Gamma_5 D_W(m)$ we have $\sign(\Gamma_5 D_W(m)) = W_m \sign(\Delta_m)W_m^H$ and use (i). 
\end{proof}

\section{Smearing and Normality} \label{smearing:sec}

To measure the deviation from normality of $D_W$ we 
now look at the Frobenius norm of $D_W^HD_W-D_WD_W^H$. 
We show that this measure can be fully expressed 
in terms of the pure gauge action, defined as a sum of path-ordered products of link variables, the {\em plaquettes}, to be defined 
in detail below. Based on this connection 
we then explain that ``stout'' smearing~\cite{Morningstar:2003gk}, a 
modification of the gauge links by averaging with neighboring links, has the effect 
of reducing the non-normality of $D_W$, among its other physical
benefits. This result indicates that preconditioning with the Wilson-Dirac operator and using the choice \eqref{eq:default_m} for $m^\ovl$ is increasingly better motivated as more smearing steps are applied. This observation will be substantiated by numerical experiments in section~\ref{numerics:sec}.

%

\begin{definition} Given a configuration of gauge links $\{U_\mu(x)\}$, the 
{\em plaquette} $\pla_x^{\mu,\nu}$ at lattice point $x$ is defined as  
\begin{equation} \label{plaquette_def1:eq}
\pla_x^{\mu,\nu} = U_\nu(x)U_\mu(x+\hat{\nu})U^H_\nu(x+\hat{\mu})U^H_\mu(x).
\end{equation}
\end{definition}
A plaquette thus is the product of all coupling matrices along a cycle of length 4
on the torus, such cycles being squares in a $(\mu,\nu)$-plane
\[
Q_x^{\mu,\nu} \mathrel{\widehat{=}} \Q{1}{1} \enspace .
\]
Similarly, the plaquettes in the other quadrants are defined as
\begin{equation} \label{plaquette_def2:eq}
Q_x^{\mu,-\nu} \mathrel{\widehat{=}}  \Q{1}{-1}, \quad
Q_x^{-\mu,\nu}  \mathrel{\widehat{=}}  \Q{-1}{1}, \quad
Q_x^{-\mu,-\nu}  \mathrel{\widehat{=}}  \Q{-1}{-1} \ .
\end{equation}
Note that on each cycle of length four there are four plaquettes which are conjugates of each other.
They are defined as the products of the gauge links along that cycle with different starting sites,
so that we have, e.g., $Q_{x+\hat{\mu}}^{-\mu,\nu} =
U^H_{\mu}(x)Q_x^{\mu,\nu}U_\mu(x)$, etc.

The deviation of the plaquettes from the identity is a measure for
the non-normality of $D$ as determined by the following proposition. Its proof is obtained by simple, though technical, algebra which we summarize in the appendix. 

\begin{proposition}\label{Fnorm:prop} The Frobenius norm of $D_W^HD_W-D_WD_W^H$ fulfills
\begin{equation}\label{eq:deviationfromnormality}
\| D^H_WD_W-D_WD_W^H \|_F^2 =  16 \sum_x \sum_{\mu < \nu} \Re(\tr{I-\pla_x^{\mu,\nu}})  
\end{equation}
where the first sum is to be taken over all lattice sites $x$ and  
$\sum_{\mu < \nu}$ is a shorthand for $\sum_{\mu = 0}^3 \sum_{\nu = \mu+1}^3$. 
\end{proposition}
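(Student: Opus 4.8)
The plan is to work directly from the nearest-neighbour representation \eqref{Wilson-Dirac:eq}--\eqref{DandDH_entries:eq} (setting $a=1$, as agreed after Lemma~\ref{props_Wilson_Overlap:lem}) and to reduce everything to the off-diagonal hopping part of $D_W$. Writing $D_W = cI + B$ with $c = m_0+4$ the scalar diagonal coefficient and $B$ the hopping term, the scalar shift cancels in the commutator, so that $D_W^HD_W - D_WD_W^H = B^HB - BB^H$. I would then compute the block entries of $B^HB$ and $BB^H$ (indexed by lattice sites) and classify them by the lattice displacement between row and column. Since $B$ couples only nearest neighbours, $B^HB$ and $BB^H$ can have nonzero blocks only for displacements $0$, $\pm2\muh$, and $\pm\muh\pm\nuh$ with $\mu\neq\nu$, and I would treat these three families separately.

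First I would dispose of the non-plaquette blocks. For the diagonal ($0$) blocks, a short computation using $U_\mu(x)U_\mu^H(x)=I_3$ and $(I_4\mp\gamma_\mu)^2 = 2(I_4\mp\gamma_\mu)$ shows that both $(B^HB)_{x,x}$ and $(BB^H)_{x,x}$ equal $4I_{12}$, so they cancel. For the straight $\pm2\muh$ blocks, the single admissible intermediate site forces the spin factor $(I_4+\gamma_\mu)(I_4-\gamma_\mu) = I_4-\gamma_\mu^2 = 0$ (and likewise with the factors reversed), so these blocks already vanish in $B^HB$ and in $BB^H$. Hence only the ``corner'' blocks with displacement $\pm\muh\pm\nuh$, $\mu\neq\nu$, survive in the commutator.

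The heart of the argument is the corner block. For displacement $+\muh+\nuh$ exactly two intermediate sites contribute, $x+\muh$ and $x+\nuh$, corresponding to the two length-two paths $P_1 = U_\mu(x)U_\nu(x+\muh)$ and $P_2 = U_\nu(x)U_\mu(x+\nuh)$ from $x$ to $x+\muh+\nuh$. Collecting the four spin-gauge terms and using the identity $(I_4+\gamma_\mu)(I_4-\gamma_\nu) - (I_4-\gamma_\mu)(I_4+\gamma_\nu) = 2(\gamma_\mu-\gamma_\nu)$ (and its $\mu\leftrightarrow\nu$ counterpart), the whole block collapses to $\tfrac12(\gamma_\mu-\gamma_\nu)\otimes(P_1-P_2)$. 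Its Frobenius norm then factors through the Kronecker product: $\norm[F]{\gamma_\mu-\gamma_\nu}^2 = \tr{(\gamma_\mu-\gamma_\nu)^2} = 8$ because the $\gamma$'s anticommute, while unitarity of $P_1,P_2$ together with $\tr{P_1^HP_2} = \tr{\pla_x^{\mu,\nu}}$ gives $\norm[F]{P_1-P_2}^2 = 2\,\Re\tr{I-\pla_x^{\mu,\nu}}$. Multiplying out shows that this single block contributes $4\,\Re\tr{I-\pla_x^{\mu,\nu}}$.

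It remains to sum over all surviving blocks with the correct combinatorial factor. Since $\norm[F]{\cdot}^2$ of a block matrix is the sum of the squared Frobenius norms of its blocks, I would total the corner contributions by noting that each geometric unit square in a $(\mu,\nu)$-plane is ``enclosed'' by exactly four ordered corner blocks --- the two diagonals of the square, each in both orientations --- and that all four yield the same value $\Re\tr{I-\pla}$, since conjugate and orientation-reversed plaquettes have equal $\Re\tr{\cdot}$ (this is the conjugacy relation $\pla_{x+\muh}^{-\mu,\nu} = U_\mu^H(x)\pla_x^{\mu,\nu}U_\mu(x)$ noted after \eqref{plaquette_def2:eq}). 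Each square therefore contributes $16\,\Re\tr{I-\pla}$, and since $\sum_x\sum_{\mu<\nu}$ on the right-hand side enumerates every such square exactly once, the two sides agree. I expect this last step --- verifying that all non-corner blocks cancel and that the corner blocks assemble into precisely the factor $16$ with each plaquette counted once --- to be the only genuinely delicate part; the block algebra itself, though lengthy, is routine.
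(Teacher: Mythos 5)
Your proposal is correct and follows essentially the same route as the paper's appendix: classify the length-two paths by lattice displacement, observe that only the corner blocks survive in the commutator, reduce each such block to $\tfrac12(\gamma_\mu-\gamma_\nu)\otimes(P_1-P_2)$ with squared Frobenius norm $4\,\Re(\tr{I-\pla_x^{\mu,\nu}})$, and count four such blocks per geometric plaquette to obtain the factor $16$. Splitting off the scalar diagonal via $D_W=cI+B$ at the outset is a mild streamlining --- the paper instead tabulates the full entries of $D_W^HD_W$ and $D_WD_W^H$ and verifies that the $m$-dependent and straight-displacement terms cancel --- but the substance of the argument is identical.
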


As a consequence of Proposition~\ref{Fnorm:prop} we conclude that $D_W$ is normal in the case of the {\em free theory},
i.e., when all links $U_\mu(x)$ are equal to the identity or when $U_\mu(x) = U(x)U^H(x+\hat{\mu})$ for a collection of
$SU(3)$-matrices $U(x)$ on the lattice sites $x$. 
For physically relevant configurations, however, $D_W$ is non-normal. The quantity 
\[ 
\sum_x \sum_{\mu < \nu} \mbox{Re}(\tr{I-\pla_x^{\mu,\nu}})
\]
is known as the {\em Wilson gauge action}\footnote{To represent a physically meaningful quantity, the Wilson gauge action is usually 
scaled with a scalar factor. This is not relevant in the present context.} $S_W(\mathcal{U})$ of the gauge field $\mathcal{U} = \{U_\mu(x)\}$.  

Smearing techniques for averaging neighboring gauge links have been studied extensively in 
lattice QCD simulations. Their use in physics is motivated by the goal to reduce ``cut-off 
effects'' related to localized eigenvectors with near zero eigenvalues.
We now explain why ``stout'' smearing~\cite{Morningstar:2003gk} reduces 
the Wilson gauge action and thus drives the Wilson-Dirac operator 
towards normality. Other smearing techniques like APE~\cite{Albanese:1987ds},
HYP~\cite{Hasenfratz:2007rf} and HEX~\cite{Capitani:2006ni} have similar effects. 

Given a gauge field $ \mathcal{U}$,
stout smearing modifies the gauge links according to
\begin{equation} \label{stout:eq}
U_\mu(x) \to \tilde{U}_\mu(x) = \mathrm{e}^{\epsilon Z^{\mathcal{U}}_\mu(x)} U_\mu(x)\,
\end{equation}
where the parameter $\epsilon$ is a small positive number and 
\begin{eqnarray}
 Z^{\mathcal{U}}_\mu(x) 
               & = & -\frac{1}{2}(M_\mu(x)-M^H_\mu(x)) + \frac{1}{6}\tr{M_\mu(x)-M^H_\mu(x)}\,, \label{Zdef:eq} 
\end{eqnarray}
where
\begin{eqnarray}
              M_\mu(x) &=&  \sum_{\nu=0,\nu \neq \mu}^3 Q_x^{\mu,\nu} + Q_x^{\mu,-\nu}. \nonumber
\end{eqnarray}
Note the dependence of $Z^{\mathcal{U}}_\mu(x)$ on local plaquettes associated with $x$. 

The following result from \cite{Luscher:2010iy,Luscher:2009eq} relates the {\em Wilson flow} $\mathcal{V}(\tau) = \{V_\mu(x,\tau): x \in \mathcal{L}, \mu=0,\ldots,3\}$
defined as the solution of the initial value problem 
\begin{equation}\label{eq:floweq}
 \frac{\partial}{\partial \tau}V_\mu(x,\tau) = - \left\{ \dxmu \Sw(\mathcal{V}(\tau))\right\} V_\mu(x,\tau)\,, \quad V_\mu(x,0) = U_\mu(x)\,,
\end{equation}
to stout smearing.  Here 
$V_\mu(x,\tau) \in \SU(3)$, and $\dxmu$ is the canonical differential operator
with respect to the link variable $V_\mu(x,\tau)$ which takes values in $\su(3)$, the algebra of $\SU(3)$. 

\begin{theorem} Let $\mathcal{V}(\tau)$ be the solution of \eqref{eq:floweq}. Then
\begin{itemize}
\item[(i)] $\mathcal{V}(\tau)$ is unique for all $\mathcal{V}(0)$ and all $\tau\in(-\infty,\infty)$ and differentiable
with respect to $\tau$ and $\mathcal{V}(0)$. 
\item[(ii)] $S_W(\mathcal{V}(\tau))$ is monotonically decreasing as a function of $\tau$.
\item[(iii)] One step of Lie-Euler integration with step size $\epsilon$  for \eqref{eq:floweq}, starting at $\tau = 0$, gives the approximation $\widetilde{\mathcal{V}}(\epsilon) = \{\widetilde{V}_\mu(x,\epsilon)\}$ for $\mathcal{V}(\epsilon)$ with
\[
\widetilde{V}_\mu(x,\epsilon) = \mathrm{e}^{\epsilon Z^{\mathcal{U}}_\mu(x)} U_\mu(x),
\]
with $Z^{\mathcal{U}}_\mu(x)$ from \eqref{Zdef:eq}
\end{itemize}
\end{theorem}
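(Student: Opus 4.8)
I would prove the three assertions separately, since they rely on quite different tools.

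For (i) the key structural observation is that the flow \eqref{eq:floweq} lives on the compact manifold obtained by attaching one copy of $\SU(3)$ to each gauge link. Because $\dxmu\Sw$ takes values in $\su(3)$, the right-hand side $-\{\dxmu\Sw\}V_\mu$ is, in left-trivialized form, a tangent vector to $\SU(3)$ at $V_\mu$; differentiating $V_\mu V_\mu^H$ and $\det V_\mu$ along the flow then shows (via uniqueness) that unitarity and unit determinant are preserved, so the state stays on $\SU(3)$. The vector field is a polynomial in the link entries — the plaquettes are products of links — hence real-analytic and in particular Lipschitz on the compact state space. Uniqueness and differentiable dependence on $\tau$ and on the initial data $\mathcal{V}(0)$ then follow from Picard–Lindelöf together with the standard theorem on smooth dependence on initial conditions, while compactness of the state space rules out finite-time blow-up and gives existence for all $\tau\in(-\infty,\infty)$.

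For (ii) I would exhibit the flow as the gradient flow of $\Sw$. The defining property of the canonical derivative is that, for $Z\in\su(3)$,
\[
\left.\frac{d}{ds}\right|_{s=0}\Sw(\dots,e^{sZ}V_\mu(x),\dots)=\innerprod[]{\dxmu\Sw}{Z},
\]
with respect to the Ad-invariant inner product on $\su(3)$ for which the generators entering $\dxmu$ are orthonormal. Since \eqref{eq:floweq} prescribes exactly the left-trivialized velocity $Z=-\dxmu\Sw$, the chain rule yields
\[
\frac{d}{d\tau}\Sw(\mathcal{V}(\tau))=\sum_{x,\mu}\innerprod[]{\dxmu\Sw}{-\dxmu\Sw}=-\sum_{x,\mu}\norm{\dxmu\Sw}^2\le 0,
\]
which proves monotone decrease and shows that equality holds only at critical points of $\Sw$.

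For (iii) the Lie–Euler (exponential-Euler) step for an equation of the form $\dot V_\mu=Z_\mu(\mathcal{V})V_\mu$ with $Z_\mu(\mathcal{V})\in\su(3)$ advances by $V_\mu\mapsto e^{\epsilon Z_\mu(\mathcal{V})}V_\mu$, freezing $Z_\mu$ at the current configuration; starting from $\mathcal{V}(0)=\mathcal{U}$ this is $\widetilde V_\mu(x,\epsilon)=e^{-\epsilon\,\dxmu\Sw(\mathcal{U})}U_\mu(x)$. It therefore remains only to verify that $-\dxmu\Sw(\mathcal{U})=Z^{\mathcal{U}}_\mu(x)$ from \eqref{Zdef:eq}, and this is the sole genuinely computational step and the main obstacle. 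I would collect the terms $\re\tr{I-\pla_y^{\alpha,\beta}}$ of $\Sw$ in which the link $U_\mu(x)$ appears — these assemble, over the directions $\nu\neq\mu$ and the four orientations and starting sites, into the staple sum $M_\mu(x)$ — differentiate each along $U_\mu(x)\to e^{sT^a}U_\mu(x)$, and reassemble. Because the canonical derivative projects onto $\su(3)$, only the traceless anti-hermitian part survives, giving $\dxmu\Sw(\mathcal{U})=\tfrac12(M_\mu(x)-M^H_\mu(x))-\tfrac16\tr{M_\mu(x)-M^H_\mu(x)}$, i.e.\ exactly $-Z^{\mathcal{U}}_\mu(x)$. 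The bookkeeping over orientations and base points of the plaquettes is the delicate part, and is of the same technical-algebra flavor as the appendix computation behind Proposition~\ref{Fnorm:prop}.
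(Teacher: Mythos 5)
Your proposal is correct, and it follows the same route as the sources the paper itself relies on: the paper does not actually prove (i) and (ii) but defers them to \cite{Luscher:2010iy,Luscher:2009eq,Bonati14}, and those references argue exactly as you do --- global existence, uniqueness and smooth dependence from the smoothness of the polynomial vector field on the compact manifold $\SU(3)^{4n_{\mathcal{L}}}$ (after checking the field is tangent to it), and monotonicity of $\Sw$ from the identification of \eqref{eq:floweq} as the gradient flow, $\tfrac{d}{d\tau}\Sw(\mathcal{V}(\tau))=-\sum_{x,\mu}\|\dxmu\Sw\|^{2}\le 0$. For (iii) the paper only says the claim ``follows directly by applying the Lie-Euler scheme,'' which silently presupposes the one identity that carries all the content, namely $-\dxmu\Sw(\mathcal{U})=Z^{\mathcal{U}}_{\mu}(x)$; you correctly isolate this as the essential computational step and describe the right mechanism (collect the plaquettes containing $U_\mu(x)$ into the staple sum $M_\mu(x)$, differentiate along $U_\mu(x)\to e^{sT^a}U_\mu(x)$, and project onto the traceless anti-hermitian part, which reproduces \eqref{Zdef:eq} up to the normalization conventions of $\Sw$ and the inner product on $\su(3)$). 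You do not carry that staple bookkeeping out explicitly, so strictly speaking (iii) remains an outline at its only nontrivial point, but the plan is sound and the computation is standard; if you want a fully self-contained argument, that verification is the single piece you still owe.
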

We refer to \cite{Luscher:2010iy,Luscher:2009eq} and also \cite{Bonati14} for details of the proof for (i) and (ii). 
It is noted in~\cite{Bonati14} that the solution of~\eqref{eq:floweq} moves the gauge configuration
along the steepest descent in configuration space and thus actually minimizes
the action locally. Part (iii) follows directly by applying the Lie-Euler scheme; cf.~\cite{HaLuWa06}.

The theorem implies that one Lie-Euler step is equivalent to a step of
stout smearing, with the exception that in stout smearing links are
updated sequentially instead of in parallel. And since the Wilson
action decreases along the exact solution of \eqref{eq:floweq}, we can
expect it to also decrease for its Lie-Euler approximation, at least
when $\epsilon$ is sufficiently small. 

\begin{figure}
  \centerline{\scalebox{0.7}{\input{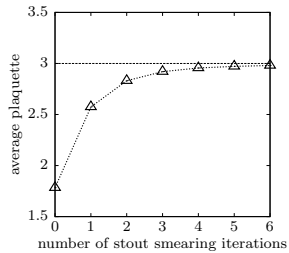}}}
  \caption{Illustration of the effect of stout smearing on the average plaquette value~\eqref{eq:avgplaquette}.\label{fig:smearingVSavgplaquette}}
\end{figure}

In Figure~\ref{fig:smearingVSavgplaquette} we illustrate the
relation between iterations of stout smearing and the average
plaquette value $Q_{\mathit{avg}}$ for configuration~\ref{JF_32_32}
(cf.~Table~\ref{table:allconfs}). The average plaquette value is defined by
\begin{equation}\label{eq:avgplaquette}
Q_{\mathit{avg}} = N_Q^{-1}\sum_x \sum_{\mu < \nu}
    \mathit{Re}(\tr{\pla_x^{\mu,\nu}}),
\end{equation} where $N_Q$ denotes the total number of
plaquettes. In terms of $Q_{\mathit{avg}}$~\eqref{eq:deviationfromnormality}
simplifies to 
\[
\norm[F]{D_{W}^HD_W - D_WD_W^H} =
16N_Q(3-Q_{\mathit{avg}}).
\] Figure~\ref{fig:smearingVSavgplaquette} shows that the Wilson action decreases rapidly in the first iterations of
stout smearing.

To conclude this section we note that there are several works
relating the spectral structure and the distribution of plaquette
values. For example, it has been shown in~\cite{Neuberger:1999pz} that the
size of the spectral gap around $0$ of $\Gamma_5 D_W$ is related to
$\Re(\tr{I-\pla_x^{\mu,\nu}})$ being larger than a certain threshold
for all plaquettes $\pla_x^{\mu,\nu}$. Other studies consider the 
connection between fluctuations of the plaquette value and localized zero modes,
see~\cite{Berruto:2000fx,Negele:1998ev,Niedermayer:1998bi}, and the
influence of smearing on these modes~\cite{Hasenfratz:2007iv}.



\section{Numerical Results} \label{numerics:sec}

\begin{table}
\centering\scalebox{0.9}{\tabcolsep=0.16cm\begin{tabular}{ccccccc}
\toprule
ID                                   & lattice size       & kernel mass                    & default            & smearing                            & provided by                              &\\
                                     & $N_t \times N_s^3$ & $m_0^\ovl$                     & overlap mass $\mu$ & $s$                                 &                                          &\\
\midrule
\ref{JF_32_32}\conflabel{JF_32_32}   & $32 \times 32^3$   & $ -1-\frac{3}{4}\sigma_{\min}$ & $0.0150000$        & $\{0,\ldots,6\}$-stout~\cite{Morningstar:2003gk} & generated from~\cite{DelDebbio:2006cn,DelDebbio:2007pz} &\\
\ref{BMW_32_32}\conflabel{BMW_32_32} & $32 \times 32^3$   & $-1.3$                         & $0.0135778$        & $3$HEX~\cite{Capitani:2006ni}                   & BMW-c, based on~\cite{Borsanyi:2012xf,Toth:2014uqa}                  &\\
\bottomrule
\end{tabular}}
\caption{Configurations used together with their parameters.
  See the references for details about their generation.}
\label{table:allconfs}
\end{table}

In physical simulations, gauge fields are generated via a stochastic process and by fixing physical parameters.
The term {\em configuration} designates a gauge field together with the information about its generation and its physical parameters.
In this section we report numerical results obtained on relatively large configurations used in current simulations involving the overlap operator, detailed in Table~\ref{table:allconfs}. The configurations with ID~\ref{JF_32_32} are available with 
different numbers $s=0,\ldots,6$ of stout smearing steps applied. Note that $s$ influences $\sigma_{\min}$, the smallest real part of all eigenvalues of $D_W(0)$. The given choice for $m_0^\ovl$ as a function of $\sigma_{\min}$, used in $D_N = \rho I + \Gamma_5 \sign(\Gamma_5 D_W(m_0^\ovl))$  places the middle of the first `hole' in the spectrum of $D_W(m_0^\ovl)$ 
to be at the origin. The configuration with ID~\ref{BMW_32_32} was
obtained using 3 steps of HEX smearing in a simulation similar in spirit to \cite{Borsanyi:2012xf,Toth:2014uqa} with its physical results not yet published. The value $m_0^\ovl = -1.3$ is
the one used in the simulation. The middle of the first `hole' in $D_W(m_0^\ovl)$ is thus close to but not exactly at the origin.    
To be in line with the conventions from~\cite{Borsanyi:2012xf}, e.g., we express the parameter $\rho\geq 1$ used in the overlap operator
$D_N$ as
\[
\rho = \frac{-\mu/2 + m_0^{\ovl}}{\mu/2 + m_0^{\ovl}},
\] 
where $\mu >0$ is yet another, ``overlap'' mass parameter.
In our experiments, we will frequently consider a whole range for $\mu$ rather than just the default value from Table~\ref{table:allconfs}. 
The default value for $\mu$ is chosen such that it fits to other physically interpretable properties of the respective configurations like, 
e.g., the pion mass $m_\pi$. For both sets of configurations used, $m_\pi$ is approximately twice as large than the value observed in nature, 
and the ultimate goal is to drive $m_\pi$ to its physical value, which very substantially increases the cost for generating the respective configurations. We would then use smaller values for $\mu$, and the results of our experiments for such smaller $\mu$ hint at how the preconditioning will perform in future simulations at physical parameter values.
Note that smaller values for $\mu$ make $\rho$ become closer to 1, so $D_N$ becomes more ill-conditioned.

All results were obtained on the Juropa machine at J\"ulich
Supercomputing Centre, a cluster with $2,\!208$ compute nodes, each
with two Intel Xeon X5570 (Nehalem-EP) quad-core processors \cite{IntelXeonX5570,wwwJUROPA}.
This machine provides a maximum of $8,\!192$ cores for a single job from which we always use $1,\!024$ in our experiments.
For compilation we used the \texttt{icc}-compiler with the optimization flags
\texttt{-O3}, \texttt{-ipo}, \texttt{-axSSE4.2} and \texttt{-m64}.
In all tests, our code ran with roughly $2$ Gflop/s per core which accounts to $8-9\%$ peak performance.
The multigrid solver used to precondition with $D_W(m_0^\prec)$ (see below) performs at roughly $10\%$ peak.

\subsection{Accuracy of the preconditioner and influence of $m_0^\prec$}

In a first series of experiments, we solve the system 
\begin{equation} \label{linsys:eq}
D_N \psi = \eta
\end{equation}
on the one hand without any preconditioning, using GMRES(100), i.e., restarted GMRES with a cycle length of 100.
On the other hand, we solve the same system using $D_W^{-1}$ as a (right) preconditioner. To solve the respective 
linear systems with $D_W$ we use the domain decomposition based adaptive algebraic multigrid method 
(DD-$\alpha$AMG) presented in \cite{FroKaKrLeRo13}. Any other efficient solver for Wilson-Dirac equations as, e.g., 
the ``AMG'' solver developed in~\cite{MGClark2010_1,MGClark2007,Luescher2007,MGClark2010_2} could
be used as well. In our approach, preconditioning 
is done by iterating with DD-$\alpha$AMG until the relative residual is below a prescribed bound $\epsilon^{\prec}$. 
Without going into detail, let us mention that DD-$\alpha$AMG uses a red-black multiplicative Schwarz method as 
its smoother and that it needs a relatively costly, adaptive setup-phase in which restriction and prolongation 
operators---and with them the coarse grid systems---are constructed. We refer to \cite{FroKaKrLeRo13} for further reading.
The setup has to be done only once for a given Wilson-Dirac matrix $D_W$, so its cost becomes negligible when 
using DD-$\alpha$AMG as a preconditioner in a significant number of GMRES iterations.\footnote{In all our experiments, the setup 
never exceeded 2\% of the total execution time, so we do not report timings for the setup.}
We use GMRES with odd-even preconditioning~\cite{MGClark2010_2} as a solver for the coarsest system.
The whole DD-$\alpha$AMG preconditioning iteration is non-stationary which has to be accounted for by using {\em flexible}
restarted GMRES (FGMRES) \cite{Saad:2003:IMS:829576} to solve \eqref{linsys:eq} instead of GMRES. The restart length for 
FGMRES is again $100$.

\begin{figure} [thb]
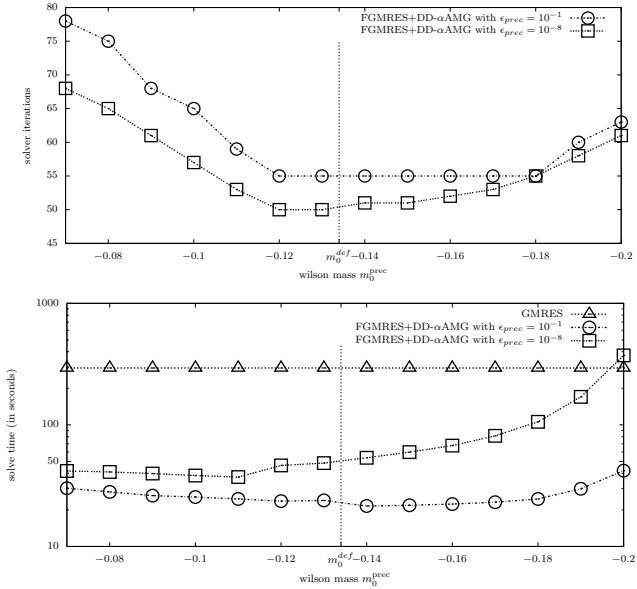

  \hspace*{0.15\textwidth}\scalebox{0.485}{\input{./scan_m0w_32s3_iter}}
  \hspace*{0.132\textwidth}\scalebox{0.50}{\input{./scan_m0w_32s3}}
  \caption{Preconditioner efficiency as a function of $m_0^\prec$ for two accuracies for the DD-$\alpha$AMG solver (configuration ID~\ref{JF_32_32}, $s=3$). Top: number of iterations, bottom: execution times.}
  \label{fig:prec_eff_m0w}
\end{figure}

Figure~\ref{fig:prec_eff_m0w} presents results 
for configuration ID~\ref{JF_32_32} with $s=3$ stout smearing steps and the default overlap mass $\mu$ from Table~\ref{table:allconfs}.
We scanned the values for $m_0^\prec$ in 
steps of $0.01$ and report the number of iterations necessary to reduce the initial residual by a factor of $10^{-8}$ for each of these values.
We chose two different values $\epsilon^\prec$ for the residual reduction required in the DD-$\alpha$AMG iteration in the preconditioning.
The choice $\epsilon^\prec = 10^{-8}$ asks for a relatively accurate solution of the systems with $D_W(m_0^\prec)$, whereas the choice 
$\epsilon^\prec = 10^{-1}$ requires an only quite low accuracy and thus only a few iterations of DD-$\alpha$AMG.
The upper part of Figure~\ref{fig:prec_eff_m0w} shows that there is a dependence of the number of FGMRES iterations on $m_0^\prec$, while at the 
same time there is a fairly large interval around the optimal value for $m_0^\prec$ in which the number of iterations required is not more than 20\% larger than the minimum. 
These observations hold for both accuracy requirements for the DD-$\alpha$AMG solver, $\epsilon^\prec = 10^{-8}$ and $\epsilon^\prec = 10^{-1}$. 
The number of iterations needed without preconditioning was 973.

The lower part of Figure~\ref{fig:prec_eff_m0w} shows that similar observations hold for the execution times.
However, the smaller iteration numbers obtained with $\epsilon^\prec = 10^{-8}$ do not translate into smaller execution times, 
since the time for each DD-$\alpha$AMG solve in the preconditioning is substantially higher as for $\epsilon^\prec = 10^{-1}$.
This turned out to hold in all our experiments, so from now on we invariably report results for $\epsilon^\prec = 10^{-1}$.  
We also observe that the value of $m_0^\default$ from \eqref{eq:default_m} lies within an interval in which iteration numbers and execution times (for both values for $\epsilon^\prec$) are quite close to the optimum. The execution time without preconditioning 
was 294s.

\begin{figure}[htb]
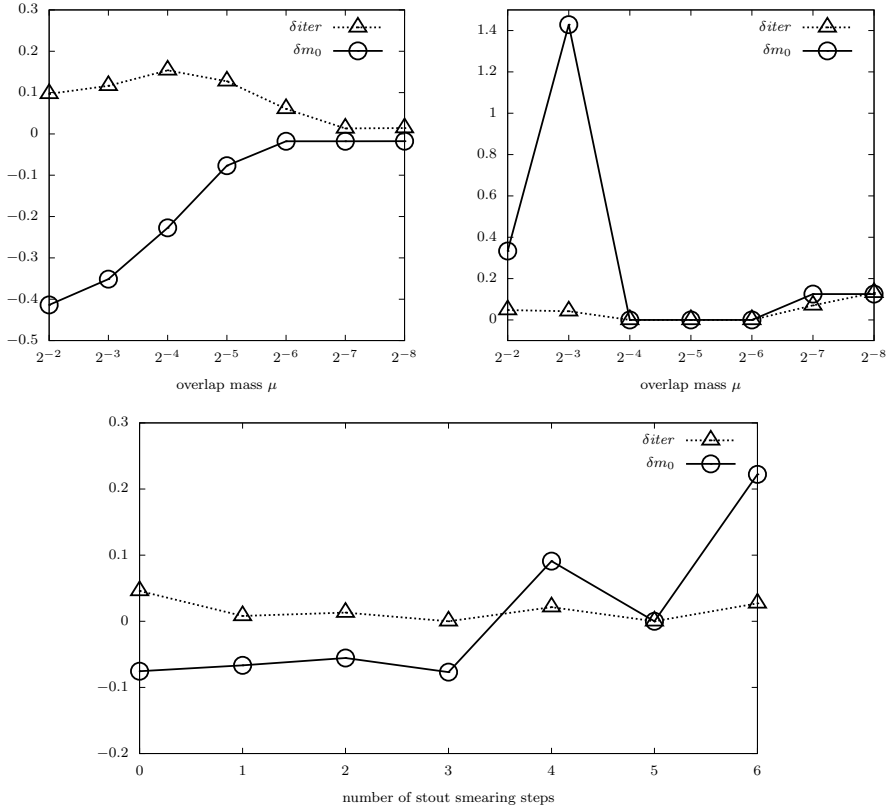

 
  \begin{minipage}[t]{0.48\textwidth}
  \centering \scalebox{0.68}{\input{./scan_rho_m0w_diff_estimate_32s0_iter}}
  \end{minipage} \hfill
   \begin{minipage}[t]{0.48\textwidth}
  \centering \scalebox{0.68}{\input{./scan_rho_m0w_diff_estimate_32s3_iter}}
  \end{minipage} 
   \centering \scalebox{0.68}{\input{./smearing_scan_m0w_diff_estimate}}
  \caption{Quality of $m_0^\default$ without smearing (top left), with $s=3$ steps of stout smearing (top right), and for $s=0,\ldots,6$ steps of 
  stout smearing at fixed $\mu$ (bottom), configuration ID~\ref{JF_32_32}.}
  \label{fig:estimate_s0_s3}
\end{figure}

Figure~\ref{fig:estimate_s0_s3} reports results which show that the default value $m_0^\default$ is a fairly good choice in general.
For two different configurations (no smearing and 3 steps of stout smearing) and a whole range of overlap masses $\mu$, the plots at the top give
the relative difference $\delta m_0 = (m_0^{\mathit{opt}} - m_0^\default)/m_0^\default$ of the optimal value $m_0^{\mathit{opt}}$ for $m_0^\prec$ and its default value from 
\eqref{eq:default_m} as well as the similarly defined relative difference $\delta \mathrm{iter}$ of the corresponding iteration numbers. These 
results show that the iteration count for the default value $m_0^\default$ is never more than 15\% off the best possible iteration count. 
The plot at the bottom backs these findings. We further scanned a whole range of smearing steps $s$ at the default value for $\mu$ 
from Table~\ref{table:allconfs}, and the number of 
iterations with $m_0^\default$ is never more than $5\%$ off the optimal value. The large values for $\delta m_0$ in the top right plot for 
$\mu = 2^{-3}$ are to be attributed to the fact that the denominator in the definition of $\delta m_0$, i.e., 
$m_0^\default$ is almost zero in this case.

These results suggest that \eqref{eq:default_m} is indeed a good choice for $m_0^\prec$. 
However, $\sigma_{\min}$ needed to compute $m_0^\default$  from
\eqref{eq:default_m} is not necessarily known a priori, and it may be
more efficient to approximate the optimal value for $m_0$ ``on the
fly'' by changing its value from one preconditioned FGMRES iteration
to the next. 

In order to minimize the influence of the choice of
$m_0^\prec$ on the aspects discussed in the following sections we will
always use the optimal $m_0^\prec$, computed to a precision of $.01$ by scanning
the range $[-\widetilde{\sigma}_{\min},0]$, where
$\widetilde{\sigma}_{\min}$ is a rough guess at $\sigma_{\min}$ which
fulfills $\widetilde{\sigma}_{\min} > \sigma_{\min}$. This guess can be
easily obtained by a fixed number of power iterations to get an
approximation for the largest real part $\widetilde{\sigma}_{\max}$ of
an eigenvalue of $D$ and then using the symmetry
of the spectrum to obtain $\widetilde{\sigma}_{\min}$ by rounding
$8-\widetilde{\sigma}_{\max}$ to the first digit.

\subsection{Quality and cost of the preconditioner}

We proceed to compare in more detail preconditioned FGMRES($100$) with
unpreconditioned GMRES($100$) in terms of the iteration count. As
before, the iterations were stopped when the initial residual was
reduced by a factor of at least $10^{-8}$.


\begin{figure}[htb]
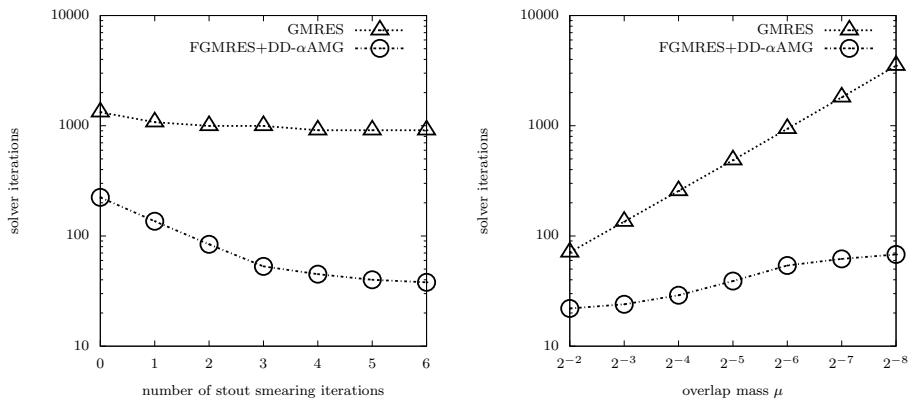

  \begin{minipage}[t]{0.48\textwidth}
  \centering \scalebox{0.68}{\input{./scan_s_32}}
  \end{minipage} \hfill
  \begin{minipage}[t]{0.48\textwidth}
  \centering \scalebox{0.68}{\input{./scan_rho_32s3}}
  \end{minipage}
  \caption{Comparison of preconditioned FGMRES(100) with unpreconditioned GMRES(100) (configuration ID~\ref{JF_32_32}). Left: dependence on the 
  number of stout smearing steps $s$ for default value for  $\mu$, cf.\ Table~\ref{table:allconfs}. Right: dependence on the overlap mass $\mu$ for $s=3$.}
  \label{fig:gmres_comparison}
\end{figure}

Figure~\ref{fig:gmres_comparison} gives this comparison, once as a function of the non-normality of the configuration, i.e., the number $s$ of stout smearing steps applied, and once as a function of the overlap mass $\mu$.
We see that for the default value of $\mu$ from Table~\ref{table:allconfs}, the quality of the preconditioner increases with the number $s$ of stout smearing steps, 
ranging from a factor of approximately $5$ for $s=0$ over $12$ for $s=3$ up to $25$ for $s=6$.  We also see that the quality of the preconditioner increases as $\mu$ becomes smaller, i.e., when $D_N$ becomes more ill-conditioned. 

From the practical side, a comparison of the execution times is more important than comparing iteration numbers. Before giving timings, we
have to discuss relevant aspects of the implementation in some detail.

Each iteration in GMRES or preconditioned FGMRES for \eqref{linsys:eq} requires one matrix vector multiplication with $D_N = \rho I + \Gamma_5\sign(\Gamma_5D_W)$. The matrix $D_N$ is not given explicitly as it would be a full, very large matrix despite $\Gamma_5 D_W$ 
being sparse. Therefore, a matrix vector multiplication $D_N\chi$ is obtained via an additional ``sign function iteration'' which approximates 
$\sign(\Gamma_5 D_W)\chi$ as part of the computation of $D_N\chi$. For this sign function iteration we use the restarted Krylov subspace method proposed 
recently in \cite{FrGuSc14b,FrGuSc14} which allows for thick restarts of the Arnoldi process and has proven to be among the most efficient methods to approximate $\sign(\Gamma_5 D_W)\chi$. The sign function iteration then still represents the by far most expensive part of the overall computation. 

A first approach to reduce this cost, see \cite{Cu05a}, is to use relaxation in the sense that one lowers the (relative) accuracy 
$\varepsilon_{\sign}$ of the 
approximation as the outer (F)GMRES iteration proceeds. The theoretical analysis of inexact Krylov subspace methods in \cite{SiSz,SlevdE}
shows that the relative accuracy of the approximation to the matrix-vector product at iteration $k$ should be in the order of 
$\epsilon/ \| r_k \|$ (with $r_k$ the (F)GMRES residual at iteration $k$) to achieve that at the end of the (F)GMRES iteration the initial residual be decreased by a factor of $\epsilon$. We used this relaxation strategy in our experiments.

A second commonly used approach, see e.g.~\cite{Edwards:1998yw,Giusti:2002sm,vdE02}, to reduce the cost of the sign function iteration is deflation. In this approach the $k$ smallest in modulus eigenvalues $\lambda_1,\ldots,\lambda_k$ and their normalized eigenvectors $\xi_1,\ldots,\xi_k$ are precomputed once. With $\Xi = [\xi_1|\ldots|\xi_k]$ 
and $\Pi = I-\Xi\Xi^H$ the orthogonal projector on the complement of these eigenvectors, $\sign(\Gamma_5D_W)\chi$ is given as
\[
\sign(\Gamma_5D_W)\chi = \sum_{i=1}^k \sign(\lambda_i) (\xi^H\chi) \xi_i + \sign(\Gamma_5D_W) \Pi \chi.
\]
The first term on the right side can be computed explicitly and the second term is now easier to approximate with the sign function iteration,
since the $k$ eigenvalues closest to the singularity of $\sign(\cdot)$ are effectively eliminated via $\Pi$.

\begin{table}[htb]
\centering\scalebox{0.9}{\begin{tabular}{llcc}
\toprule
                        & parameter                                &   notation                 & default    \\
\midrule
(F)GMRES${}^{dp}$       & required reduction of initial residual   & $\varepsilon_{\outer}$     & $10^{-8}$  \\
                        & relaxation strategy                      & $\varepsilon_{\sign}$      & $\frac{\varepsilon_{\outer}}{\| r_k\|}\cdot 10^{-2}$                                                                 \\
                        & restart length for FGMRES                & $m_{\restart}$             & $100$                   \\
\midrule
DD-$\alpha$AMG${}^{sp}$ & required reduction of initial residual   & $\varepsilon_{\prec}$      & $10^{-1}$  \\
                        & number of levels                         &                            & $2$        \\
\bottomrule
  \end{tabular}}
  \caption{Parameters for the overlap solver. Here, {\em dp} denotes double precision and {\em sp} single precision.}
  \label{table:allparms1}
\end{table}

Table~\ref{table:allparms1} summarizes the default settings used for the results reported in Figure~\ref{fig:constant_sign_fct_cost}. The superscripts {\em dp} and {\em sp} indicate that 
we perform the preconditioning in IEEE single precision arithmetic, while the multiplication with $D_N$ within the (F)GMRES 
iteration is done in double precision arithmetic. Such mixed precision approaches are a common further strategy to reduce computing times 
in lattice simulations.

\begin{figure}[htb]
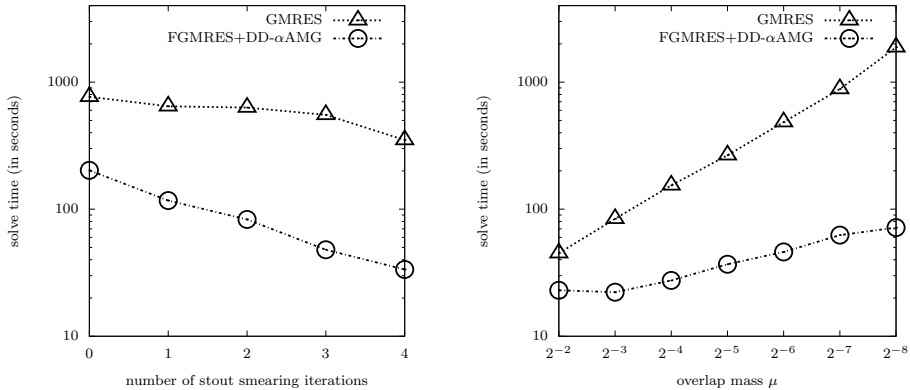

  \begin{minipage}[t]{0.48\textwidth}
  \centering \scalebox{0.68}{\input{./comparison_fixed_sign_cost}}
  \end{minipage} \hfill
  \begin{minipage}[t]{0.48\textwidth}
  \centering \scalebox{0.68}{\input{./fixed_cost_scan_rho_32s3}}
  \end{minipage}
  \caption{Comparison of execution times for preconditioned FGMRES and GMRES. Left: for $0$ to $4$ steps of stout smearing (configuration ID~\ref{JF_32_32}, default value for $\mu$ from Table~\ref{table:allconfs}), right: different overlap masses $\mu$ for configuration ID~\ref{JF_32_32} and 3-step stout smearing.}
  \label{fig:constant_sign_fct_cost}
\end{figure}

For the results reported in Figure~\ref{fig:constant_sign_fct_cost} we tried to keep the cost for a matrix vector multiplication with $D_N$ 
independent of the number of smoothing steps which were applied to the configuration. To do so, we used the $100$th smallest eigenvalue of $\Gamma_5D_W$ for $s=0$ as
a threshold, and deflated all eigenpairs with eigenvalues below this threshold for the configurations with $s>0$. 
The left plot in Figure~\ref{fig:constant_sign_fct_cost} shows that, at fixed default overlap mass $\mu$, we gain a factor of 4 to 10 
in execution time using the preconditioner. The quality of the preconditioning improves with the numbers of smearing steps. 
The right part of Figure~\ref{fig:constant_sign_fct_cost} shows that for smaller values of $\mu$ we can expect an even larger reduction of
the execution time. For the smallest value considered, $\mu = 2^{-8},$ which is realistic for future lattice simulations, the improvement due to preconditioning is a factor of about $25$.


\subsection{Comparison of optimized solvers}



Physics production codes for simulations with the overlap operator use recursive preconditioning as an additional technique to further
reduce the cost for the matrix vector multiplication (MVM) with $D_N$; cf.~\cite{Cu05a}. This means that the FGMRES iteration is preconditioned by using an additional ``inner'' iteration to approximately invert $D_N$, this inner iteration being itself again FGMRES. The point is that we may require only low accuracy for this inner iteration, implying that all MVMs with $\sign(\Gamma_5D_W)$ in the inner iteration may be approximated to low accuracy and  computed in IEEE single precision, only. 

In this framework, we can apply the DD-$\alpha$AMG preconditioner, too, but this time as a preconditioner for the inner FGMRES iteration. In this manner we keep the advantage of needing only a low accuracy approximation to the MVM with $\sign(\Gamma_5D_W)$, while at the same time reducing the 
number of inner iterations and thus the (low accuracy) evaluations of MVMs with $\sign(\Gamma_5D_W)$. 

We denote $\varepsilon_{\inner}$ the residual reduction we ask for in the unpreconditioned inner iteration and $\varepsilon_{\inner}^{\prec}$ the corresponding accuracy required when using the DD-$\alpha$AMG iteration as a preconditioner. The inner iteration converges much faster when we use preconditioning. More accurate solutions in the inner iteration reduce the number of outer iterations and thus the number of costly high precision MVMs with $\sign(\Gamma_5D_W)$. When preconditioning is used for the inner iteration, requiring a higher accuracy in the inner iteration comes at relatively low additional cost. It is therefore advantageous to choose $\varepsilon_{\inner}^{\prec}$ smaller than $\varepsilon_{\inner}$. As an addition to Table~\ref{table:allparms1}, Table~\ref{table:allparms2} lists the default values we used for the inner iteration and which were found to be fairly optimal via numerical testing. 

\begin{table}[htb]
\centering\scalebox{0.9}{\begin{tabular}{llcc}
\toprule
                        & parameter                                            &  notation                  & default     \\
\midrule

inner FGMRES${}^{sp}$            & required reduction of initial residual  & $\varepsilon_{\inner}^{\prec}$    & $10^{-2}$  \\
                        & (with preconditioning)                  &                           &            \\
                        & required reduction of initial residual  & $\varepsilon_{\inner}    $    & $10^{-1}$   \\
                        & (without preconditioning)               &                                   &             \\
                        & relaxation strategy                     & 
                        &  $\frac{\varepsilon_{\inner},\varepsilon_{\inner}^{\prec}}{\|r_k\|}\cdot 10^{-2}$                                       \\
                        & restart length                          & $m_{\restart}^{\inner}$ & $100$       \\
\bottomrule
  \end{tabular}}
  \caption{Parameters for the inner iteration.}
  \label{table:allparms2}
\end{table}

Figure~\ref{fig:recursive} shows results for the solvers optimized in
this way. We consider different sizes for the deflation subspace, i.e., the number of 
smallest eigenvalues which we deflate explicitly. The computation of these eigenvalues (via PARPACK~\cite{wwwPARPACK}) is costly, so that deflating a 
larger number of eigenvalues is efficient only if several system solves with the same overlap operator are to be performed. The figure shows that,
irrespectively from the number of deflated eigenvalues, the preconditioned recursive method outperforms the unpreconditioned method in a 
similar way it did in the non-recursive case considered before. When more smearing steps are applied, the improvement grows; improvement factors 
reach 10 or more. The figure also shows that in the case that we have to solve only one or two linear systems with the same matrix, it is not advisable to use deflation at all, the cost for the computation of the eigenvalues being too large. We attribute this finding at least partly to 
the fact that the thick restart method used to approximate the sign function from \cite{FrGuSc14} is particularly efficient, here.
While all other data in Figure~\ref{fig:recursive} was obtained for configuration ID~\ref{JF_32_32}, 
the rightmost data on the left plot refers to configuration
ID~\ref{BMW_32_32}. We see a similar high efficiency of our
preconditioner as we did for configuration~\ref{JF_32_32} with 3
smearing steps, an observation consistent with the fact that
configuration ID~\ref{BMW_32_32} was also obtained using 3 steps of
(HEX) smearing,  see Table~\ref{table:allconfs}.

\begin{figure}[]
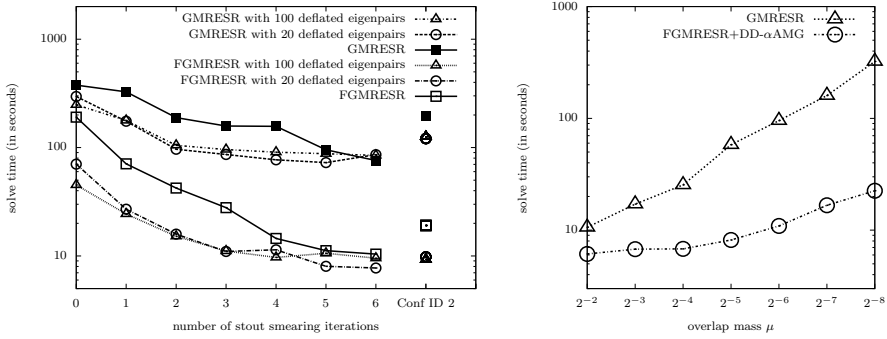

  \centering \scalebox{0.58}{\input{./recursive}}
  \centering \scalebox{0.58}{\input{./recursive_scaling}}
  \caption{Comparison of GMRESR with FGMRESR with different deflation spaces (configuration IDs~\ref{JF_32_32} and \ref{BMW_32_32} with $1,\!024$ processes).}
  \label{fig:recursive}
\end{figure}


\section*{Conclusions} The new, fast adaptive algebraic multigrid solvers for the Wilson-Dirac operator $D_W$ 
now allow to efficiently use this operator as a preconditioner for the overlap operator. We presented a thorough analysis of this auxiliary space preconditioner in the case that $D_W$ is normal. This is not the case in practice, but 
the trend in current simulations in lattice QCD is to reduce the non-normality of $D_W$ as one approaches the continuum limit and smearing techniques are applied. For a state-of-the-art parallel implementation 
and for physically relevant configurations and parameters we showed that the improvements in time to solution gained through the preconditioning are at least a factor of 4 and, typically, more than 10.

\section*{Acknowledgments}

We thank the Budapest-Marseille-Wuppertal collaboration and Jakob Finkenrath for providing configurations. All numerical results were obtained on Juropa at J\"ulich Supercomputing Centre (JSC) through NIC grant HWU12.

\appendix

\section{The entries of $D_W^HD_W-D_WD_W^H$}\label{app:normalitydeviation}
In order to prove Proposition~\ref{Fnorm:prop} we inspect the entries of $D_W^HD_W-D_WD_W^H$.
\newcommand{\pro}{\pi}
We use the notation $\pro_\mu^{\pm}$ for the matrices
\[
  \pro_\mu^{\pm} = \tfrac{1}{2}(I_4 \pm \gamma_\mu), \; \mu=0,\ldots,3.
  \]
  
The relations \eqref{commutativity_rel:eq} between the $\gamma$-matrices show that
each $\pro_\mu^\pm$ is a projection and that, in addition, 
\begin{equation} \label{proj_relations:eq}
   \pro_\mu^{+}\pro_\mu^{-}= \pro_\mu^{-}\pro_\mu^{+} = 0, \; \mu=0,\ldots,3. 
\end{equation}

Considering all 12 variables at each lattice site as an entity, the graph associated with the nearest neighbor coupling 
represented by the matrix $D_W$ is the 4d-torus, and similarly for $D_W^H$. Table~\ref{D_DH:tab} repeats \eqref{DandDH_entries:eq}, 
giving the non-zero entries of a (block) row in $D_W$ and $D_W^H$ in terms of the $12 \times 12$ matrices which couple lattice site $x$ with the sites
$x$ and $x\pm \hat{\mu}$. We use $m$ to denote $m_0+4$ with $m_0$ from \eqref{Wilson-Dirac:eq}  

\begin{table}[]
\begin{center}
\begin{tabular}{l|ll}
                       & \multicolumn{1}{c}{$D$}        & \multicolumn{1}{c}{$D^H$} \\ \hline
$(x,x)$                &  $mI_{12}$                     & $mI_{12}$       \\
$(x,x +\hat{\mu})$   & $-\pro_\mu^{-} \otimes U_\mu(x)$ & $-\pro_\mu^+ \otimes U_\mu(x)$ \\
$(x,x -\hat{\mu})$   & $-\pro_\mu^{+} \otimes U^H_\mu(x-\hat{\mu})$ & $-\pro_\mu^- \otimes U^H_\mu(x-\hat{\mu})$ 
\end{tabular} 
\end{center}
\caption{Coupling terms in $D_W$ and $D_W^H$.\label{D_DH:tab}}
\end{table}

The product $D_W^HD_W$ represents a coupling between nearest and next-to-nearest lattice sites; the coupling $12 \times 12$ matrices
are obtained as the sum over all paths of length two on the torus of the product of the respective coupling matrices in $D_W^H$ and $D_W$.
A similar observation holds for $D_WD_W^H$. 
Table~\ref{D^HD:tab} reports all the entries of $D_W^HD_W$, and we now shortly discuss all the paths of length 2
which contribute to these entries of $D_W^HD_W$.  

For the diagonal position $(x,x)$ we have 21 paths of length 2,  $(x,x) \to (x,x) \to (x,x)$ and $(x,x) \to (x,x\pm \hat{\mu}) \to (x,x), \mu = 0,\ldots,3$. The contribution of each of the latter 20 paths is 0 due to \eqref{proj_relations:eq}.  
For a nearest neighbor $(x,x + \hat{\mu})$ we have the two paths $(x,x) \to (x,x) \to (x,x + \hat{\mu})$ and $(x,x) \to (x,x+\hat{\mu}) \to (x,x+\hat{\mu})$, and similarly in the negative directions. 
For a position $(x,x \pm 2\hat{\mu})$ there is 
only one path $(x,x) \to (x,x \pm \hat{\mu}) \to (x,x \pm 2\hat{\mu})$, with the product of the couplings being 0 due to 
\eqref{proj_relations:eq}. Finally, for the other next-to-nearest neighbors we always have two paths, for example  $(x,x) \to (x,x+\hat{\mu}) \to (x + \hat{\mu} - \hat{\nu})$ and $(x,x) \to (x,x-\hat{\nu}) \to (x + \hat{\mu} - \hat{\nu})$. 
 
The coupling terms in $D_WD_W^H$ are identical to those for $D_W^HD_W$ except that we have to interchange all $\pro_\mu^+$ and $\pro_\mu^-$ as well
as all $\pro_\nu^+$ and $\pro_\nu^-$.

\begin{table} 
\begin{center}
\begin{tabular}{l|l}
$(x,x)$                &  $m^2I_{12}$                    \\ \hline
$(x,x +\hat{\mu})$   & $-m (\pro_\mu^+ + \pro_\mu^-) \otimes U_\mu(x) $ \\
$(x,x -\hat{\mu})$   & $-m (\pro_\mu^+ + \pro_\mu^-) \otimes U_\mu(x-\hat{\mu})$ \\ \hline
$(x,x \pm 2\hat{\mu})$  & $0$                                    \\ \hline
\multicolumn{1}{c|}{$\nu \neq \mu$:}    &    \\
$(x,x+\hat{\mu} + \hat {\nu})$ & $\pro_\mu^-\pro_\nu^+ \otimes U_\mu(x)U_\nu(x+\hat{\mu}) 
                               + \pro_\nu^-\pro_\mu^+ \otimes U_\nu(x)U_\mu(x+\hat{\nu}) $  \\                        
$(x,x+\hat{\mu} - \hat {\nu})$ & $ \pro_\mu^-\pro_\nu^- \otimes U_\mu(x)U^H_\nu(x+\hat{\mu}-\hat{\nu}) 
                               + \pro_\nu^+\pro_\mu^+ \otimes U^H_\nu(x-\hat{\nu})U_\mu(x-\hat{\nu}) $  \\
$(x,x-\hat{\mu} - \hat {\nu})$ & $\pro_\mu^+\pro_\nu^- \otimes U^H_\mu(x-\hat{\mu})U^H_\nu(x-\hat{\mu}-\hat{\nu}) 
   + \pro_\nu^+\pro_\mu^- \otimes U^H_\nu(x-\hat{\nu})U^H_\mu(x-\hat{\nu}-\hat{\mu}) $  
\end{tabular} 
\end{center}
\caption{Coupling terms in $D_W^HD_W$. The coupling terms in $D_WD_W^H$ are obtained by interchanging all $\pro_\mu^+$ and $\pro_\mu^-$ as well
as all $\pro_\nu^+$ and $\pro_\nu^-$.\label{D^HD:tab}}
\end{table}

This shows that in $D_W^HD_W-D_WD_W^H$ the only no-vanishing coupling terms are those at positions 
$(x,x+\hat{\mu} + \hat{\nu})$,  $(x,x+\hat{\mu} - \hat{\nu})$ and $(x,x-\hat{\mu} - \hat{\nu})$ for 
$\mu \neq \nu$. They are given in Table~\ref{non_normality:tab}, where we used the identities 
\begin{equation*}
\begin{array}{rcl}
  \pro_{\mu}^{-}\pro_{\nu}^{-} -
    \pro_{\mu}^{+}\pro_{\nu}^{+} &=& \frac12\left(- \gamma_{\mu} -
      \gamma_{\nu}\right),\\[1ex]
  \pro_{\mu}^{+}\pro_{\nu}^{-} -
    \pro_{\mu}^{-}\pro_{\nu}^{+} &=& \frac12\left(\gamma_{\mu} -
      \gamma_{\nu}\right),\\[1ex]
  \pro_{\mu}^{-}\pro_{\nu}^{+} -
    \pro_{\mu}^{+}\pro_{\nu}^{-} &=& \frac12\left(-\gamma_{\mu} +
      \gamma_{\nu}\right),\\[1ex]
  \pro_{\mu}^{+}\pro_{\nu}^{+} -
    \pro_{\mu}^{-}\pro_{\nu}^{-} &=& \frac12\left(\gamma_{\mu} +
      \gamma_{\nu}\right).\\[1ex]
\end{array}
\end{equation*}
By rearranging the terms we obtain the plaquettes from \eqref{plaquette_def1:eq} and 
\eqref{plaquette_def2:eq}. We exemplify this for position $(x,x+\hat{\mu} + \hat{\nu})$
\begin{eqnarray*} 
  \lefteqn{\hspace*{-3.5ex} \pro_\mu^-\pro_\nu^+ \otimes U_\mu(x)U_\nu(x+\hat{\mu})  + \pro_\nu^-\pro_\mu^+ \otimes U_\nu(x)U_\mu(x+\hat{\nu}) } \\
  \lefteqn{\hspace*{-7ex}\, - \, \left( \pro_\mu^+\pro_\nu^- \otimes U_\mu(x)U_\nu(x+\hat{\mu})  + \pro_\nu^+\pro_\mu^- \otimes U_\nu(x)U_\mu(x+\hat{\nu})  \right)}   \\
  &=&  \tfrac{1}{2}( - \gamma_\mu + \gamma_\nu) \otimes U_\mu(x)U_\nu(x+\hat{\mu})  +  
             \tfrac{1}{2}(\gamma_\mu - \gamma_\nu) \otimes U_\nu(x)U_\mu(x+\hat{\nu})  \\
  &=& \tfrac{1}{2}(-\gamma_\mu + \gamma_\nu) \otimes \left( I_3 - Q_x^{\mu,\nu} \right) U_\mu(x)U_\nu(x+\hat{\mu}) .
\end{eqnarray*}  

\begin{table}
\begin{tabular}{l|l} 
$\mu \neq \nu$: &    \\ \hline 
$(x,x+\hat{\mu} + \hat{\nu})$ & $ \tfrac{1}{2}(-\gamma_\mu + \gamma_\nu) \otimes \left( I_3 - Q_x^{\mu, \nu} \right) U_\mu(x)U_\nu(x+\hat{\mu}) $ \\
$(x,x+\hat{\mu} - \hat{\nu})$ & $\tfrac{1}{2}(-\gamma_\mu - \gamma_\nu) \otimes \left( I_3 - Q_x^{\mu, -\nu} \right) U_\mu(x)U_\nu^H(x+\hat{\mu}-\hat{\nu}) $ \\
$(x,x-\hat{\mu} - \hat{\nu})$ & $\tfrac{1}{2}(\gamma_\mu - \gamma_\nu) \otimes \left( I_3 - Q_x^{-\mu, -\nu} \right) U_\mu^H(x-\hat{\mu})U_\nu^H(x-\hat{\mu}-\hat{\nu}) $ 
\end{tabular}
\caption{Coupling terms in $D_W^HD_W-D_WD_W^H$. \label{non_normality:tab}}
\end{table}

Using the fact that for the Frobenius norm we have
\begin{eqnarray*}
   \| AQ \|_F &=& \|A\|_F \mbox{ whenever $Q$ is unitary (and $AQ$ is defined)}, \\
   \| A \otimes B \|_F  &=& \|A \|_F \cdot \|B \|_F \mbox{ for all $A,B$},
\end{eqnarray*}
we obtain the following for the squares of the Frobenius norms of all the coupling matrices from 
Table~\ref{non_normality:tab}:
\[
\begin{array}{lcl}
 2 \| I - Q_x^{\mu,\nu}\|_F^2 && \mbox{ for position } (x,x+\muh + \nuh), \\
 2 \|I-Q_x^{\mu,-\nu} \|_F^2 &&\mbox{ for position } (x,x+\muh-\nuh),\\
 2 \| I - Q_x^{-\mu,-\nu} \|_F^2 &&\mbox{ for position } (x,x-\muh-\nuh).
\end{array}
\]

Finally for any unitary matrix $Q$ we have 
\[
\| I -Q \|_F^2 = \tr{(I-Q^H)(I-Q)} = 
     2 \cdot \Re(\tr{I-Q}).
\]  

Now we obtain $\|D_W^HD_W-D_WD_W^H\|_F^2$ by summing the squares of the Frobenius norms of all 
coupling matrices. This is a sum over 24$n$ coupling matrices, $n$ being the number of lattice 
sites. As discussed before, groups of four of these coupling matrices refer to the same plaquette 
$Q_x^{\mu,\nu}$  up to conjugation in SU(3), so $\tr{I-Q}$ is the same for these four plaquettes 
$Q$. We can thus ``normalize'' to only consider all possible ``first quadrant'' plaquettes 
$Q_x^{\mu,\nu}$ and obtain  
\[
\| D_W^HD_W - D_WD_W^H \|_F^2 = 4 \sum_x \sum_{\mu < \nu} 2 \cdot 2 \cdot \Re(\tr{I-\pla_x^{\mu,\nu}}).
\]

\bibliographystyle{siam}
\bibliography{overlap_note_cites}

\end{document}